\newtheorem{theorem}{Theorem}[section]
\newtheorem{lemma}[theorem]{Lemma}
\theoremstyle{definition}
\theoremstyle{remark}
\newtheorem{remark}[theorem]{Remark}
\numberwithin{equation}{section}
\newcommand{\bol}[1]{\mbox{\boldmath$#1$}}
\newcommand{\bSigma}{\mathbf{\Sigma}}
\newcommand{\bmu}{\bol{\mu}}
\newcommand{\hbmu}{\hat{\bol{\mu}}}
\newcommand{\btheta}{\bol{\theta}}
\newcommand{\bb}{\mathbf{b}}
\newcommand{\bx}{\mathbf{x}}
\newcommand{\by}{\mathbf{y}}
\newcommand{\bC}{\mathbf{C}}
\newcommand{\bX}{\mathbf{X}}
\newcommand{\bY}{\mathbf{Y}}
\newcommand{\bw}{\mathbf{w}}
\newcommand{\hbw}{\mathbf{\hat{w}}}
\newcommand{\bOne}{\mathbf{1}}
\newcommand{\bI}{\mathbf{I}}
\newcommand{\bxi}{\boldsymbol{\xi}}
\newcommand{\bD}{\mathbf{D}}
\newcommand{\bV}{\mathbf{V}}
\newcommand{\bS}{\mathbf{S}}
\providecommand{\keywords}[1]
{
\small	
\textbf{\textit{Keywords}:} #1
}
\title{Dynamic Shrinkage Estimation of the High-Dimensional Minimum-Variance Portfolio}
\author[1]{Taras Bodnar}
\author[2]{Nestor Parolya\footnote{Corresponding author: Nestor Parolya. E-mail address: n.parolya@tudelft.nl.}}
\author[1]{Erik Thors{\'e}n}
\affil[1]{Department of Mathematics, Stockholm University, Roslagsv\"{a}gen 101, SE-10691 Stockholm, Sweden}
\affil[2]{Department of Applied Mathematics, Delft University of Technology, Mekelweg 4,
2628 CD Delft, The Netherlands}
\date{\today}
\begin{document}
\maketitle

\begin{abstract}
    In this paper, new results in random matrix theory are derived which allow us to construct a shrinkage estimator of the global minimum variance (GMV) portfolio when the shrinkage target is a random object. More specifically, the shrinkage target is determined as the holding portfolio estimated from previous data. The theoretical findings are applied to develop theory for dynamic estimation of the GMV portfolio, where the new estimator of its weights is shrunk to the holding portfolio at each time of reconstruction. Both cases with and without overlapping samples are considered in the paper. The non-overlapping samples corresponds to the case when different data of the asset returns are used to construct the traditional estimator of the GMV portfolio weights and to determine the target portfolio, while the overlapping case allows intersections between the samples. The theoretical results are derived under weak assumptions imposed on the data-generating process. No specific distribution is assumed for the asset returns except from the assumption of finite $4+\varepsilon$, $\varepsilon>0$, moments. Also, the population covariance matrix with unbounded spectrum can be considered. The performance of new trading strategies is investigated via an extensive simulation. Finally, the theoretical findings are implemented in an empirical illustration based on the returns on stocks included in the S\&P 500 index.
\end{abstract}

\keywords{Shrinkage estimator; high-dimensional covariance matrix; random matrix theory; minimum variance portfolio; parameter uncertainty; dynamic decision making}

\textbf{JEL classification:} G11, C13, C14, C55, C58, C65

\section{Introduction}\label{sec:intro}

The global minimum-variance (GMV) portfolio is the one of the most commonly used investment strategies by both practitioners and researchers in finance. This portfolio possesses the smallest variance among all optimal portfolios obtained as solutions of Markowitz's mean-variance optimization problem (cf., \citet{markowitz1952}). It solves the following problem   
\begin{equation}\label{opt_GMV}
    \bw^\top \bSigma \bw \to min
    \quad \text{with} \quad \bw^\top \bOne_p=1, 
\end{equation}
where $\bw$ denotes the vector of the portfolio weights which determines the structure of the investor portfolio, the symbol $\bOne_p$ stands for the $p$-dimensional vector of ones, and $\bSigma$ is the covariance matrix of the $p$-dimensional vector of asset returns $\by=(y_1,...,y_p)^\top$. 

The solution to the optimization problem \eqref{opt_GMV} is given by
\begin{equation}\label{w_GMV}
    \bw_{GMV}=\frac{\bSigma^{-1}\bOne_p}{\bOne_p^\top\bSigma^{-1}\bOne_p}.
\end{equation}
The weights of the GMV portfolio have several nice properties, which simplify its applicability in practice and, thus, make it a popular investment strategy. The weights of the GMV portfolio do not depend on the mean vector of the asset returns, which we will denote by $\bmu$ in the following. This is the only mean-variance optimal portfolio whose weights are independent of $\bmu$. Moreover, the GMV portfolio has a special location on the set of the mean-variance optimal portfolios, which is a parabola in the mean-variance space and is known as the efficient frontier  (cf., \citet{merton1972}). Its mean and variance determines the location of the vertex of this parabola (see, e.g, \citet{kan2008distribution}, \citet{Bodnar2009Econometricalanalysisofthesampleefficientfrontier}). 

The application of \eqref{w_GMV} requires knowledge of $\bSigma$. In practice, we do not know what or how it looks. The covariance matrix $\bSigma$ has to be estimated using historical data of the asset returns, before the GMV portfolio can be constructed. The quality of the estimator of $\bSigma$ has a large impact on the stochastic properties of the GMV portfolio. It leads to further uncertainty in the investor decision problem, known as the estimation uncertainty. The estimation uncertainty can have a great impact on the constructed portfolio. It can be larger than the one induced by the model uncertainty included in the optimization problem \eqref{opt_GMV}. The effect becomes even stronger, when the portfolio dimension is comparable to the sample size used to estimate $\bSigma$.

Traditionally, the covariance matrix is estimated by its sample counterpart given by
\begin{equation}\label{bS}
\bS_n=\frac{1}{n-1}\sum_{i=1}^n (\by_i-\bar{\by}_n)(\by_i-\bar{\by}_n)^\top
=\frac{1}{n-1}\bY_n\left(\bI_n-\frac{1}{n}\bOne_n\bOne_n^\top\right)\bY_n^\top
\quad \text{with} \quad \bar{\by}_n=\frac{1}{n}\sum_{i=1}^n \by_i,
\end{equation}
where $\by_1,..,\by_n$ denotes the sample of asset returns and $\bY_n=(\by_1,..,\by_n)$ denotes the data matrix. The symbol $\bI_n$ stands for the $n$-dimensional identity matrix. Then, the traditional estimator, or empirical version, of $\bw_{GMV}$ is obtained as
\begin{equation}\label{hw_S}
    \hbw_{S}=\frac{\bS_n^{-1}\bOne_p}{\bOne_p^\top\bS_n^{-1}\bOne_p}.
\end{equation}
The distributional properties of $\hbw_{S}$ have extensively been studied in statistical and econometric literature. \citet{jobson1980estimation} derive the asymptotic distribution of $\hbw_{S}$ assuming that the asset returns are independent and normally distributed and the portfolio dimension is considerably smaller than the sample size. \citet{okhrin2006distributional} obtain the exact distribution of the sample estimator of the GMV portfolio weights assuming normality, while \citet{bodnar2008test} extend these results to elliptically contoured distribution and develop a statistical test theory on the GMV portfolio weights. 

Large portfolios are important to investors due to the diversification principle. However, when the portfolio dimension is comparable to the sample size many classical results does not hold. There is so much uncertainty such that we can not even guarantee consistency of the estimators. The results derived under the classical asymptotic regime, that is when $p$ considerably smaller than $n$, can no longer be used. The effect of dimensionality dominates the estimation of the covariance matrix. Using recent results from random matrix theory, several improved estimators for the weights of the GMV portfolio has been suggested when the portfolio dimension is comparable to the sample size. This is often called the large-dimensional asymptotic regime (see, e.g., \citet{bai2010spectral}). The properties of high-dimensional optimal portfolio weights are also studied by \citet{fan2012vast}, \citet{hautsch2015high}, \citet{ao2019approaching}, \citet{kan2019sample}, \citet{bodnar2020sampling}, \citet{cai2020high}, \citet{ding2020high}, among others.

One of the most commonly used methods to construct an improved estimator for the weights of the GMV portfolio is shrinkage estimators. These were first proposed by \citet{stein1956} with the aim to reduce the estimation error present in the sample mean vector computed for a sample from a multivariate normal distribution. Recently, this procedure has also been applied in the construction of the improved estimators of the high-dimensional mean vector (cf, \cite{chetelat2012}, \cite{wang2014}, \citet{bodnar2019optimal}), covariance matrix (see, e.g., \citet{lw2004}, \citet{lw12}, \citet{BodnarGuptaParolya2014}), inverse of the covariance matrix (see, e.g., \citet{wang2015shrinkage}, \citet{BodnarGuptaParolya2016}), as well as of the optimal portfolio weights (see, \citet{golosnoy2007multivariate}, \citet{frahm2010}, \citet{lw17}, \citet{bodnar2018estimation}, \citet{bodnarokhrinparolya2020}). Interval shrinkage estimators of optimal portfolio weights have recently been derived by \citet{BodnarDmytriv2019}, \citet{bodnar2020tests}. 

The shrinkage estimator for the weights of the GMV portfolio are obtained as a linear combination of the sample estimator $\hbw_{S}$ and the target portfolio $\bb$ with $\bb^\top \bOne=1$. The estimator is expressed as (see, \citet{bodnar2018estimation})
\begin{equation}\label{hw_SH}
\hbw_{SH}=\hat{\psi}_n\hbw_{S} + (1-\hat{\psi}_n)\bb \end{equation}
where
\begin{equation}\label{hpsi}
\hat{\psi}_n= \frac{(1-c_n) \hat{R}_{\bb}}{c_n+(1-c_n) \hat{R}_{\bb}} \quad \text{with}\quad \hat{R}_{\bb}=\left(1-c_n\right)
\bb^\top \bS_n\bb \bOne_p^\top\bS_{n}^{-1}\bOne_p-1
\quad \text{and}\quad 
c_n=\frac{p}{n}.
\end{equation}
\citet{bodnar2018estimation} show that this shrinkage estimator outperforms the sample estimator of the GMV portfolio weights in terms of minimizing the out-of-sample portfolio variance. The difference becomes drastic when $p$ approaches $n$. Moreover, the shrinkage estimator of the GMV portfolio weights \eqref{hw_SH} provides a simple and promising procedure on how one-period portfolio choice problems based on minimizing the portfolio variance can be solved in practice.

Once an optimal portfolio is determined, an investor face the problem of reallocation in the next period of time. One of the important decision to be made by the investor is to decide whether the holding portfolio is optimal or if it has to be adjusted (see, e.g., \citet{bodnar2009sequential}). As more data comes in, the estimators will change and in turn, the weights of the optimal portfolio.  
In the current paper we contribute to the literature by developing a dynamic GMV portfolio based on the shrinkage approach. At each time point of portfolio reconstruction the traditional estimator of the GMV portfolio weights is shrunk towards the weights of the holding portfolio, which by construction are the shrinkage estimator of the GMV portfolio from the previous period. The practical advantage of such a dynamic trading strategy are many. Making large changes can be costly and even be prohibited. Large institutions can not close large positions from one day to another since their movement might influence the market. Furthermore, depending on your loss it might not be optimal to transition from one portfolio to another without taking the concept of a transition into account. Especially when it comes to the out-of-sample variance as a loss function. 

From the perspectives of statistical theory, we develop new results that allow us to use the shrinkage estimators with a random target. These estimators are obtained under weak conditions imposed on the data-generating process. In particular, we only assume the existence of fourth moments together with a location and scale family. No other assumptions are made for the asset returns. Moreover, no assumption about the spectrum of the population covariance matrix is imposed in this paper. The eigenvalues of the covariance matrix can be as large as in factor models (see, e.g., \citet{fan2008high}, \citet{fan2013large}, \citet{ding2020high}). We only require that the ratio of the variances of the target portfolio and the GMV portfolio is bounded. Moreover, the derived theoretical results allow for application of overlapping samples in the determination of the target portfolio and construction of the traditional portfolio used in the specification of the shrinkage GMV portfolio.

The rest of the paper is organized as follows. In Section \ref{sec:seq_shrinkage}, the main theoretical findings of the paper are provided. The dynamic shrinkage estimator for the weights of the GMV portfolio is derived in the case of non-overlapping samples in Section \ref{sec:GMV_non-overlapping}, while Section \ref{sec:GMV_overlapping} presents the results in the overlapping case. The performance of the new trading strategies is investigated in Section \ref{sec:num_study} via an extensive simulation study, where the approaches are also compared to the existing ones. In Section \ref{sec:empirical}, the new approaches to estimate the GMV portfolio are implemented to the real data consisting of the returns on stocks included in S\&P 500 index. Concluding remarks are given in Section \ref{sec:summary}, while the technical proofs are moved to the appendix (see, supplementary material).

\section{Dynamic estimation of GMV portfolio}\label{sec:seq_shrinkage}

Throughout this paper we assume that the GMV portfolio is constructed at time point $t_1$ by using a sample of size $n_1$. The investor then updates the constructed GMV portfolio as new information arrives from the capital market. The information set is presented as a sequence of asset returns taken between time point $t_{i-1}$ and $t_{i}$ for $i=2,...,T$. Between each pairs $(t_{i-1},t_i)$ it is assumed that $n_i$ vectors of asset returns are available which are collected into the data matrix $\bY_{n_i}$ that is assumed to possess the following stochastic representation:
\begin{equation}\label{obs_i}
    \bY_{n_i}=\bmu \bOne_{n_i}^\top+\bSigma^{\frac{1}{2}}\bX_{n_i},
\end{equation}
where $\bX_{n_i}$ is a $p\times n_i$ matrix which consists of independent and identically distributed (i.i.d.) real random variables with zero mean and unit variance. We assume that the entries of $\bX_{n_i}$, $i=1,...,T$, possess the $4+\varepsilon$, $\epsilon>0$, moments, while no specific distributional assumption is imposed on the element of $\bX_{n_i}$. To this end, it is assumed that $\bY_{n_i}$, $i=1,...,T$, are independent random matrices. 


We consider an investor who aims to perform shrinkage estimation of the GMV portfolio weights in each period of time $t_i$. Namely, after constructing the shrinkage estimator of the GMV portfolio as defined in \eqref{hw_SH} at time point $t_1$, the investor updates the GMV portfolio weights by shrinking their sample estimator computed at each time point $t_i$ to the holding GMV portfolio determined at time point $t_{i-1}$. Two estimation strategies are developed in this section, which are based on non-overlapping and overlapping samples, respectively. The first procedure can be related to the rolling window estimation but with potentially different sample sizes. The main advantage here is that smaller sample sizes are used in the construction of the sample weights of the GMV portfolio and, thus, extreme observations or deviations in the asset returns can sooner be detected. Such a strategy might be recommendable during the turbulent period on the capital market, since it allows a faster adjustment of the holding portfolio. In contrary, when the stable period on the capital market is present, then the investor would prefer to use all available information. It leads to an extended window estimation strategy. In this case part of data used in the construction of the GMV portfolio has already been used to determine the currently holding portfolio. The new estimator will then contain new and old information and, consequently, we have a case with overlapping samples. Both situations require completely different techniques from random matrix theory to be developed in order to derive the stochastic properties of the estimation procedures, which are developed in the consequent two subsections.

\subsection{Dynamic GMV portfolio with non-overlapping samples}\label{sec:GMV_non-overlapping}

Under the non-overlapping scenario, the investor uses the sample of asset returns collected in $\bY_{n_i}$ to construct the sample estimator of the GMV portfolio at each time point $t_i$ expressed as
\begin{equation}\label{hw_Si}
    \hbw_{S;n_i}=\frac{\bS_{n_i}^{-1}\bOne_p}{\bOne_p^\top\bS_{n_i}^{-1}\bOne_p}
\quad \text{with} \quad
\bS_{n_i}=\frac{1}{n_i-1}\bY_{n_i}\left(\bI_{n_i}-\frac{1}{n_i}\bOne_{n_i}\bOne_{n_i}^\top\right)\bY_{n_i}^\top.
\end{equation}

At time point $t_i$ the shrinkage estimator of the GMV portfolio is then obtained by shrinking \eqref{hw_Si} to the weights of the holding portfolio, i.e., to the shrinkage estimator of the GMV portfolio $\hbw_{SH;n_{i-1}}$ constructed in the previous period. We do so by minimizing the loss function determined as the out-of-sample variance with respect to the shrinkage intensity $\psi_{n_i}$ in the following way:
\begin{equation}\label{L_i}
   \min_{\psi_{i}} L_i(\psi_{i})=\min_{\psi_{i}}\hbw_{SH;n_i}^\top \bSigma \hbw_{SH;n_i}
\end{equation}
with
\begin{equation}\label{w_SHi}
\hbw_{SH;n_{i}}=\psi_{i}\hbw_{S;n_i}+ (1-\psi_{i})\hbw_{SH;n_{i-1}},
\end{equation}
where $\hbw_{SH;n_{0}}=\bb$ is the shrinkage target used for the construction of the shrinkage estimator for the GMV portfolio weights at time point $t_1$.

Rewriting \eqref{L_i} we get
\[L_i(\psi_{i})=\psi_{i}^2\hbw_{S;n_i}^\top \bSigma \hbw_{S;n_i}+ 2\psi_{i}(1-\psi_{i})\hbw_{S;n_i}^\top\bSigma \hbw_{SH;n_{i-1}}+ (1-\psi_{i})^2\hbw_{SH;n_{i-1}}^\top\bSigma \hbw_{SH;n_{i-1}},\]
which is minimized at
\begin{equation}\label{alp_ni_star}
    \psi_{n_i}^*= \frac{\hbw_{SH;n_{i-1}}^\top \bSigma \left(\hbw_{SH;n_{i-1}}-\hbw_{S;n_i}\right)} {\left(\hbw_{SH;n_{i-1}}-\hbw_{S;n_i}\right)^\top \bSigma\left(\hbw_{SH;n_{i-1}}-\hbw_{S;n_i}\right)}.
\end{equation}

In Theorem \ref{th1} we derive the asymptotic equivalent to $\psi_{n_i}^*$ which can be used in the construction of the shrinkage estimator at time point $t_i$.

\begin{theorem}\label{th1}
Let $\bY_{n_i}$ possess the stochastic representation as in \eqref{obs_i} and let $\bb$ be the deterministic shrinkage target for $i=1$. Assume that the relative loss of portfolio $\bb$ given by 
\begin{equation}\label{relat_loss-bb}
r_{0}=\frac{V_{\bb}}{V_{GMV}}-1
=\bOne_p^\top \bSigma^{-1}\bOne_p\bb^\top\bSigma \bb-1    
\end{equation}
is uniformly bounded in $p$,
where 
\begin{equation}\label{var-bb-gmv}
V_{\bb}=\bb^\top \bSigma \bb \quad\text{and} \quad V_{GMV}=\bw_{GMV}^\top \bSigma \bw_{GMV}=\frac{1}{\bOne_p^\top \bSigma^{-1}\bOne_p}
\end{equation}
are the variances of the target portfolio $\bb$ and of the population GMV portfolio, respectively. Then it holds that
\begin{equation}\label{alp_i-star}
  \left|\psi_{n_i}^*-\psi_i^*\right| \stackrel{a.s.}{\rightarrow} 0 
  \quad \text{with} \quad
  \psi_i^*= \frac{(1-c_i)r_{i-1}}{(1-c_i)r_{i-1}+c_i}
\end{equation}
for $p/n_i \to c_i \in (0,1)$ as $n \to \infty$ where $r_i$ is the asymptotic equivalent of the relative loss $r_{\hbw_{SH;n_{i}}}= \bOne_p^\top \bSigma^{-1}\bOne_p \hbw_{SH;n_{i}}^\top \bSigma \hbw_{SH;n_{i}} -1$ of the portfolio with weights $\hbw_{SH;n_{i}}$ given by
\begin{equation}\label{r_i-star}
r_i= (\psi_{i}^*)^2\frac{c_i}{1-c_i}+(1-\psi_{i}^*)^2r_{i-1}
\end{equation}
for $i=1,...,T$.
\end{theorem}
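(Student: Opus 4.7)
The plan is to proceed by induction on the period index $i$, at each step computing asymptotic equivalents of the numerator and denominator of $\psi_{n_i}^*$ and of the variance $\hbw_{SH;n_i}^\top \bSigma \hbw_{SH;n_i}$ by means of deterministic-equivalent results from random matrix theory applied conditionally on the past. The induction hypothesis will carry two pieces of information: (a) $\bOne_p^\top \bSigma^{-1}\bOne_p \cdot \hbw_{SH;n_{i-1}}^\top \bSigma \hbw_{SH;n_{i-1}} - 1 \stackrel{a.s.}{\to} r_{i-1}$, and (b) the scalar $r_{i-1}$ is uniformly bounded in $p$, which will follow from the convex-combination structure of the recursion and the uniform boundedness of $r_0$.

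The base case $i=1$ reuses the setting of the shrinkage estimator in \eqref{hw_SH}--\eqref{hpsi} with deterministic target $\bb$, so it reduces to evaluating three bilinear forms built from $\bS_{n_1}^{-1}$ and $\bSigma$. The two workhorse deterministic equivalents I will use are, for $\bu,\bv$ independent of $\bS_n$ with suitably bounded scaled norms,
\begin{equation*}
\bu^\top \bS_n^{-1} \bv - \frac{1}{1-c_n}\bu^\top \bSigma^{-1}\bv \stackrel{a.s.}{\to} 0, \qquad \bu^\top \bSigma \bS_n^{-1}\bv - \frac{1}{1-c_n}\bu^\top\bv \stackrel{a.s.}{\to} 0,
\end{equation*}
the second of which follows from the first by the whitening $\bS_n = \bSigma^{1/2}\bW\bSigma^{1/2}$ and the change of variables $\tilde\bu = \bSigma^{1/2}\bu$, $\tilde\bv = \bSigma^{-1/2}\bv$.

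For the inductive step, I exploit the fact that under non-overlapping sampling $\bS_{n_i}$ is independent of $\hbw_{SH;n_{i-1}}$. Conditioning on $\sigma(\bY_{n_1},\dots,\bY_{n_{i-1}})$ freezes $\hbw_{SH;n_{i-1}}$ as a vector with $\bOne_p^\top \hbw_{SH;n_{i-1}}=1$, and the two bilinear-form identities above apply with this vector in place of $\bu$. After verifying that the denominator $\bOne_p^\top \bS_{n_i}^{-1}\bOne_p$ is asymptotic to $(1-c_i)^{-1}\bOne_p^\top \bSigma^{-1}\bOne_p$, I obtain
\begin{equation*}
\hbw_{S;n_i}^\top \bSigma \hbw_{S;n_i} - \frac{V_{GMV}}{1-c_i} \stackrel{a.s.}{\to} 0, \quad \hbw_{SH;n_{i-1}}^\top \bSigma \hbw_{S;n_i} - V_{GMV} \stackrel{a.s.}{\to} 0, \quad \hbw_{SH;n_{i-1}}^\top\bSigma\hbw_{SH;n_{i-1}} - V_{GMV}(1+r_{i-1})\stackrel{a.s.}{\to}0,
\end{equation*}
the last being the induction hypothesis multiplied by $V_{GMV}$. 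Substituting these into \eqref{alp_ni_star} and cancelling $V_{GMV}$ gives $\psi_{n_i}^* \to r_{i-1}/(r_{i-1}+c_i/(1-c_i)) = (1-c_i)r_{i-1}/((1-c_i)r_{i-1}+c_i) = \psi_i^*$. Substituting the same three limits into the quadratic decomposition of $\hbw_{SH;n_i}^\top \bSigma \hbw_{SH;n_i}$, dividing by $V_{GMV}$ and using the identity $2\psi(1-\psi)+(1-\psi)^2 = 1-\psi^2$ collapses the cross and target terms and yields the recursion \eqref{r_i-star}, closing the induction.

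The main obstacle is justifying that the deterministic-equivalent identities can be applied with $\hbw_{SH;n_{i-1}}$ treated as a deterministic vector. This requires uniform control on the scaled norm $\bOne_p^\top\bSigma^{-1}\bOne_p \cdot \hbw_{SH;n_{i-1}}^\top\bSigma\hbw_{SH;n_{i-1}} = 1+r_{i-1}+o(1)$, and on $\|\bSigma^{1/2}\hbw_{SH;n_{i-1}}\|$ more generally; both are handled by the induction because the recursion \eqref{r_i-star} is a convex combination of two bounded, nonnegative quantities ($c_i/(1-c_i)$ bounded away from $\infty$ since $c_i<1$, and $r_{i-1}$ bounded by the induction), so $r_i$ stays uniformly bounded for every $i=1,\ldots,T$. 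A secondary point is that $\bSigma$ has potentially unbounded spectrum, but the uniform boundedness of the relative loss is precisely the condition needed to invoke the bilinear-form convergence under only $4+\varepsilon$ moments, which is the same framework used in the authors' earlier work and what makes the conditioning argument go through at every stage of the induction.
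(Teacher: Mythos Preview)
Your proposal is correct and follows essentially the same approach as the paper's proof: both argue by induction (the paper phrases it as ``repeating the above steps for $i=3,\dots,T$''), apply the bilinear-form deterministic equivalents of Lemma~\ref{lem:lem1} conditionally on the past samples---using the independence of non-overlapping $\bY_{n_i}$ to treat $\hbw_{SH;n_{i-1}}$ as fixed---and then close the recursion by computing the asymptotic relative loss $r_i$. One minor notational point: your three displayed limits should be stated after normalization by $V_{GMV}$ (equivalently, after multiplying by $\bOne_p^\top\bSigma^{-1}\bOne_p$), since $V_{GMV}$ itself may vanish or diverge with $p$; the paper carries out exactly this normalization, and your subsequent ``cancelling $V_{GMV}$'' and ``dividing by $V_{GMV}$'' show you already have this in mind.
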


The proof of Theorem \ref{th1} is given in the appendix (see, supplementary material). Its result provide a simple recursive algorithm how the shrinkage intensities have to be computed in practice. Independently of the number of portfolio reallocations, $T$, the only unknown quantity in the algorithm is the relative loss of of the target portfolio $\bb$ used in the construction of the shrinkage estimator at time $i=1$. Using the sample $\bY_{n_1}$ its consistent estimator is given by
\begin{equation}\label{hat_r0}
    \hat{r}_0=\left(1-\frac{p}{n_1}\right)\bOne_p^\top \bS_{n_1}^{-1}\bOne_p\bb^\top\bS_{n_1} \bb-1.
\end{equation}
Then, the resulting (bona fide) shrinkage estimator of the GMV portfolio at time $i$ is given by
\begin{equation}\label{w_BFi}
\hbw_{BF;n_{i}}=\hat{\psi}^*_{i}\hbw_{S;n_i}+ (1-\hat{\psi}^*_{i})\hbw_{BF;n_{i-1}}
\quad \text{with} \quad
\hat{\psi}^*_{i}= \frac{(n_i-p)r_{i-1}}{(n_i-p)r_{i-1}+p},
\end{equation}
where $\hat{r}_i$ is computed recursively by
\begin{equation}\label{hat_ri}
    \hat{r}_i=(\hat{\psi}_{i}^*)^2\frac{p}{n_i-p}
    +(1-\hat{\psi}_{i}^*)^2\hat{r}_{i-1}
\end{equation}
with $\hat{r}_0$ as in \eqref{hat_r0} and $\hbw_{BF;n_{0}}=\bb$.

We conclude this section with several important remarks:

\begin{remark}\label{rem:rem1}
The deterministic target portfolio $\bb$ can also be replaced by the sample GMV portfolio computed by using data available before the sample $\bY_{n_1}$ is taken. If we denote these data by $\bY_{n_0}$, then the target weights $\bb$ are replaced by
\begin{equation}\label{hw_S0}
    \hbw_{S;n_0}=\frac{\bS_{n_i}^{-1}\bOne_p}{\bOne_p^\top\bS_{n_0}^{-1}\bOne_p}
\quad \text{with} \quad
\bS_{n_0}=\frac{1}{n_0-1}\bY_{n_0}\left(\bI_{n_0}-\frac{1}{n_0}\bOne_{n_0}\bOne_{n_0}^\top\right)\bY_{n_0}^\top.
\end{equation}
In this case the relative loss $r_0$ does not longer depend on the population covariance matrix $\bSigma$ and following the proof of Theorem \ref{th1} it is given by
\[\tilde{r}_0=\frac{c_0}{1-c_0}\approx \frac{p}{n_0-p}.\]
As a result, the (bona fide) shrinkage estimator of the GMV portfolio weights is obtained as in \eqref{w_BFi} and \eqref{hat_ri} with $\hat{r}_0$ replaced by $\tilde{r}_0$ and $\hbw_{BF;n_{0}}=\hbw_{S;n_0}$. In a similar way other random targets can be employed into our model, e.g., nonlinear shrinkage \cite{lw12}, but then the asymptotics and estimation of $r_0$ becomes highly nontrivial. One needs to handle every one of those targets separately. Thus, because of the large number of possible competitors we leave this interesting topic on the choice of the vector $b$ for the future research and for the sake of brevity concentrate ourselves on the naive equally weighted target $b=\bOne_p/p$ in our simulation and empirical studies.
\end{remark}

\begin{remark}\label{rem:rem2}
The results of Theorem \ref{th1} are derived under very week conditions which require the existent of $4+\varepsilon$, $\varepsilon>0$, moments only. No structural assumption on $\bSigma$ neither on $\bb$ are imposed.
\end{remark}

\begin{remark}\label{rem:rem3}
Other consistent estimators for $r_0$ can also be constructed. For instance, we can update our estimator at each time point $t_i$ as soon as new data of the asset returns become available. Let $N_i=\sum_{j=1}^i n_j$ be the total number of asset return vectors available at time point $t_i$ and let $\bY_{N_i}$ be the $p\times N_i$ matrix of the asset returns up to time $N_i$ that is $\bY_{N_i}=(\bY_{n_1}\,\bY_{n_2}\,...\,\bY_{n_i}\,)$. Then, at time point $i$, a consistent estimator for $r_0$ is obtained by
\begin{equation}\label{hat_r0i}
    \hat{r}_{0;i}=\left(1-\frac{p}{N_i}\right)\bOne_p^\top \bS_{N_i}^{-1}\bOne_p\bb^\top\bS_{N_i} \bb-1.
\end{equation}
where $\bS_{N_i}$ is the sample covariance matrix based on the data matrix $\bY_{N_i}$ as given in \eqref{bS} with $n=N_i$. Then, the (bona fide) shrinkage estimator of the GMV portfolio weights is computed following \eqref{w_BFi} and \eqref{hat_ri} with $\hat{r}_0$ replaced by $\hat{r}_{0;i}$. Since larger dataset is used to estimate $r_0$, we expect that this approach will perform better as the one suggested in \eqref{hat_r0} - \eqref{hat_ri}. On the other side, we change our initial belief of $r_0$ so we should expect the weights to change. The new method is also harder to implement and potentially more time demanding, since the recursion in \eqref{hat_ri} has to be started from the beginning at each time $t_i$.
\end{remark}

\subsection{Dynamic GMV portfolio with overlapping samples}\label{sec:GMV_overlapping}
In this section, we present the shrinkage estimator for the GMV portfolio which is constructed based on overlapping samples. In Remark \ref{rem:rem3} it is suggested to use all available data $\bY_{n_1},\bY_{n_2},...,\bY_{n_i}$ up to time point $t_i$ to determine a consistent estimator for the relative loss $r_0$ of portfolio $\bb$. Here, we use a similar idea in the construction of the sample estimator of the GMV portfolio weights at time $t_i$. Such an approach possesses the advantage that we only require $n_1>p$, while the other sample sizes $n_2,...,n_T$ can be smaller than the portfolio dimension $p$.

Using the notations $N_i$, $\bY_{N_i}$, and $\bS_{N_i}$ introduced in Remark \ref{rem:rem3}, we define
\begin{equation}\label{hw_S_Ni}
    \hbw_{S;N_i}=\frac{\bS_{N_i}^{-1}\bOne_p}{\bOne_p^\top\bS_{N_i}^{-1}\bOne_p}.
\end{equation}
as the sample estimator of the GMV portfolio weights based on data of the asset returns included in $\bY_{N_i}$. Substituting $\hbw_{S;N_i}$ instead of $\hbw_{S;n_i}$ in \eqref{w_SHi}, the loss function $L_i(\psi_{i})$ in \eqref{L_i} is maximized at
\begin{equation}\label{alp_Ni_star}
    \Psi_{N_i}^*= \frac{\hbw_{SH;N_{i-1}}^\top \bSigma \left(\hbw_{SH;N_{i-1}}-\hbw_{S;N_i}\right)} {\left(\hbw_{SH;N_{i-1}}-\hbw_{S;N_i}\right)^\top \bSigma\left(\hbw_{SH;N_{i-1}}-\hbw_{S;N_i}\right)}.
\end{equation}

In Theorem \ref{th2} we derive an iterative procedure for computing the deterministic equivalents to $\Psi_{N_i}^*$ for $i=1,...,T$. The proof of Theorem \ref{th2} is given in the appendix (see, supplementary material).
\begin{theorem}\label{th2}
Let $\bY_{n_i}$ possess the stochastic representation as in \eqref{obs_i} and let $\bb$ be the deterministic shrinkage target for $i=1$. Assume that the relative loss of portfolio $\bb$ given by $R_{0}=\bOne_p^\top \bSigma^{-1}\bOne_p\bb^\top\bSigma \bb-1$ is uniformly bounded in $p$. Then it holds that
\begin{equation}\label{Psi_Ni}
  \left|\Psi_{N_i}^*-\Psi_i^*\right| \stackrel{a.s.}{\rightarrow} 0 
\quad \text{with} \quad
\Psi_i^*=\frac{(R_{i-1}+1)-K_i}
{(R_{i-1}+1)+(1-C_i)^{-1}-2K_i},
\end{equation}
for $p/N_j \rightarrow C_j \in (0, 1)$ as $N_j\rightarrow \infty$, $j=1,...,i$ and $i=1,...,T$ where
\begin{equation}\label{Psi_Ki}
K_i=\beta^{*}_{i-1;0}+\sum_{j=1}^{i-1} \beta^{*}_{i-1;j}D_{j,i}
\end{equation}
and
\begin{equation}\label{Psi_Ri}
R_i= (\Psi_{i}^*)^2\frac{C_i}{1-C_i}+(1-\Psi_{i}^*)^2R_{i-1}+2\Psi_{i}^*(1-\Psi_{i}^*)(K_i-1),
\end{equation}
with
\begin{equation}\label{Psi_betaji}
\beta^{*}_{0;0}=1, \quad \beta^{*}_{i-1;i-1}= \Psi_{i-1}^*,
\quad \text{and} \quad \beta^{*}_{i-1;j}= (1-\Psi_{i-1}^*)\beta^{*}_{i-2;j}, \, j=0,...i-2
\end{equation}
and
\begin{equation}\label{Psi_Dji}
D_{j,i}=1-\frac{2(1-C_j)}{(1-C_j)+(1-C_i)\frac{C_j}{C_i}+\sqrt{\left(1-\frac{C_j}{C_i}\right)^2+4(1-C_i)\frac{C_j}{C_i}}}.
\end{equation}
\end{theorem}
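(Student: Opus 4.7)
The plan is to prove the statement by induction on $i$, simultaneously tracking the deterministic equivalents $\Psi_i^*$ and $R_i+1=\bOne_p^\top\bSigma^{-1}\bOne_p\cdot\hbw_{SH;N_i}^\top\bSigma\hbw_{SH;N_i}$. The base case $i=1$ falls back to the non-overlapping setting of Theorem \ref{th1}, since $\hbw_{SH;N_0}=\bb$ is deterministic and only the single sample $\bY_{N_1}$ enters. For the induction step I assume that $|\Psi_{N_\ell}^*-\Psi_\ell^*|\stackrel{a.s.}{\to}0$ and $\bOne_p^\top\bSigma^{-1}\bOne_p\cdot\hbw_{SH;N_\ell}^\top\bSigma\hbw_{SH;N_\ell}\stackrel{a.s.}{\to}R_\ell+1$ for all $\ell<i$.

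The first step is to unfold the recursion \eqref{w_SHi} (with the larger sample) into the affine representation
\begin{equation*}
\hbw_{SH;N_{i-1}} \;=\; \beta^{*}_{i-1;0}\,\bb \;+\; \sum_{j=1}^{i-1}\beta^{*}_{i-1;j}\,\hbw_{S;N_j},
\end{equation*}
with coefficients satisfying \eqref{Psi_betaji}, modulo an a.s.\ vanishing error: the genuine random coefficients are polynomials in $\Psi_{N_1}^*,\dots,\Psi_{N_{i-1}}^*$ and the induction hypothesis lets us replace them by the same polynomials in the deterministic $\Psi_\ell^*$'s. Substituting into \eqref{alp_Ni_star} and scaling by $V_{GMV}=1/(\bOne_p^\top\bSigma^{-1}\bOne_p)$ reduces the problem to computing almost sure limits of three scaled quadratic forms:
\begin{equation*}
\bOne_p^\top\bSigma^{-1}\bOne_p\cdot\hbw_{S;N_i}^\top\bSigma\hbw_{S;N_i},\quad \bOne_p^\top\bSigma^{-1}\bOne_p\cdot\bb^\top\bSigma\hbw_{S;N_i},\quad \bOne_p^\top\bSigma^{-1}\bOne_p\cdot\hbw_{S;N_j}^\top\bSigma\hbw_{S;N_i}\,\,(j<i).
\end{equation*}
The first two are handled by standard Bai--Silverstein concentration under the $4+\varepsilon$ moment assumption: using the bilinear-form equivalent $\bS_{N_i}^{-1}\approx(1-C_i)^{-1}\bSigma^{-1}$ they converge almost surely to $(1-C_i)^{-1}$ and to $\bb^\top\bOne_p=1$, producing the $(1-C_i)^{-1}$ term in the denominator of $\Psi_i^*$ and the $\beta^{*}_{i-1;0}$ contribution to $K_i$.

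The hardest step, which I expect to be the main obstacle, is the third family of limits, which requires a genuinely new deterministic equivalent for overlapping sample covariance matrices: since $\bY_{N_j}$ is embedded as a submatrix of $\bY_{N_i}$, the resolvents $\bS_{N_j}^{-1}$ and $\bS_{N_i}^{-1}$ are strongly dependent and the naive product rule $\bS_{N_j}^{-1}\bSigma\bS_{N_i}^{-1}\approx(1-C_j)^{-1}(1-C_i)^{-1}\bSigma^{-1}$ fails. My approach is to split $\bY_{N_i}=(\bY_{N_j}\;\tilde{\bY})$ with $\tilde{\bY}$ of size $p\times(N_i-N_j)$ independent of $\bY_{N_j}$, so that, up to mean-correction terms of lower order, $\bS_{N_i}=\lambda_{j,i}\bS_{N_j}+(1-\lambda_{j,i})\tilde{\bS}$ with $\lambda_{j,i}=(N_j-1)/(N_i-1)$. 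Conditioning on $\bY_{N_j}$ and applying a Sherman--Morrison--Woodbury expansion together with the Marchenko--Pastur fixed-point equation for the independent Wishart-type piece $\tilde{\bS}$ yields a scalar quadratic equation for the deterministic equivalent of $\bOne_p^\top\bS_{N_j}^{-1}\bSigma\bS_{N_i}^{-1}\bOne_p/\bOne_p^\top\bSigma^{-1}\bOne_p$; the square root in \eqref{Psi_Dji} is precisely the associated discriminant, with the branch fixed by positivity and continuity as $C_j\downarrow C_i$. Dividing by the denominator $\bOne_p^\top\bS_{N_j}^{-1}\bOne_p\cdot\bOne_p^\top\bS_{N_i}^{-1}\bOne_p/(\bOne_p^\top\bSigma^{-1}\bOne_p)^2\to(1-C_j)^{-1}(1-C_i)^{-1}$ produces $D_{j,i}$ as stated.

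Assembling the three limits gives $\hbw_{SH;N_{i-1}}^\top\bSigma\hbw_{S;N_i}\cdot\bOne_p^\top\bSigma^{-1}\bOne_p\to K_i$ with $K_i$ as in \eqref{Psi_Ki}, while the induction hypothesis supplies $\hbw_{SH;N_{i-1}}^\top\bSigma\hbw_{SH;N_{i-1}}\cdot\bOne_p^\top\bSigma^{-1}\bOne_p\to R_{i-1}+1$. Plugging into \eqref{alp_Ni_star} produces \eqref{Psi_Ni}. Finally, expanding $\hbw_{SH;N_i}^\top\bSigma\hbw_{SH;N_i}$ via $\hbw_{SH;N_i}=\Psi_i^*\hbw_{S;N_i}+(1-\Psi_i^*)\hbw_{SH;N_{i-1}}$ and scaling by $\bOne_p^\top\bSigma^{-1}\bOne_p$ gives a linear combination of the three limits and $R_{i-1}+1$ that simplifies to \eqref{Psi_Ri}, closing the induction.
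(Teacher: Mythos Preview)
Your proposal is correct and follows essentially the same route as the paper: the paper likewise unfolds $\hbw_{SH;N_{i-1}}$ into the affine combination with coefficients $\beta^{*}_{i-1;j}$, invokes Lemma~\ref{lem:lem1} for the single-sample quadratic forms, and proves the overlapping deterministic equivalent (Lemmas~\ref{lem:lem2}--\ref{lem:lem3}) by exactly your splitting $\widetilde{\bV}_{n+m}=\tfrac{n-1}{n+m-1}\widetilde{\bV}_n+\tfrac{m-1}{n+m-1}\widetilde{\bV}_{n+1:n+m}$, conditioning on $\widetilde{\bV}_n$, and solving the resulting self-consistent equation (via Theorem~1 of Rubio--Mestre rather than a bare Marchenko--Pastur appeal), with a separate Sherman--Morrison step to pass from $\widetilde{\bV}$ to the centered $\bV$. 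The only cosmetic discrepancy is that in the paper the fixed-point equation is solved for the scalar $k_{n,n+m}$ first and then a second application of the bilinear-form lemma produces the square-root term in $D_{j,i}$, rather than reading off a quadratic discriminant directly.
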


Similarly, to the case with non-overlapping samples, the recursive procedure derived in Theorem \ref{th2} depends only on a univariate unobservable quantity $R_{0}$, which is the relative loss of the target portfolio $\bb$ at time point $t_1$. Both approaches suggested in Section \ref{sec:GMV_non-overlapping} can be used to construct a consistent estimator for $R_{0}$, and hence to obtain a (bona fide) estimator of the GMV portfolio weights. These procedures are the following:
\begin{itemize}
    \item We estimate $R_0$ by
    \begin{equation}\label{hat_R0}
    \hat{R}_0=\hat{r}_0=\left(1-\frac{p}{N_1}\right)\bOne_p^\top \bS_{N_1}^{-1}\bOne_p\bb^\top\bS_{N_1} \bb-1
    \end{equation}
    as in \eqref{hat_r0}. In this case the estimator for $R_0$ is constructed by using the first sample $\bY_{N_1}$ only and the recursive procedure of Theorem \ref{th2} is then used leading to the (bona fide) optimal shrinkage estimators for the weights at each time point $t_i$, $i \in 1,...,T$ expressed as
    \begin{equation}\label{w_BFi-Psi}
\hbw_{BF;N_{i}}=\hat{\Psi}^*_{i}\hbw_{S;N_i}+ (1-\hat{\Psi}^*_{i})\hbw_{BF;N_{i-1}}
\end{equation}
where $\hat{\Psi}^*_{i}$ is computed recursively as in Theorem \ref{th2} with $R_0$ replaced by $\hat{R}_0$ and using the empirical counterpart for $C_i$ given by $C_{N_i}=p/N_i$.
    \item At each time point $i$, we use all available information to estimate $R_0$, i.e., 
    \begin{equation}\label{hat_R0i}
    \hat{R}_{0;i}=\hat{r}_{0;i}=\left(1-\frac{p}{N_i}\right)\bOne_p^\top \bS_{N_i}^{-1}\bOne_p\bb^\top\bS_{N_i} \bb-1
    \end{equation}
    and recompute the recursion of Theorem \ref{th2} at each time point $t_i$. Since a larger dataset is used to estimate $R_0$, better results are expected although the computations becomes more time demanding in the second case.
\end{itemize}

To this end, we note that the deterministic target portfolio $\bb$ can be replaced by the sample GMV portfolio computed by using data $\bY_{n_0}$ available before the first sample $\bY_{N_1}$ as in \eqref{hw_S0} of Remark \ref{rem:rem1}. In this case we get
\[\widetilde{R}_0=\frac{p}{n_0-p},\]
which is used in the iterative computation of Theorem \ref{th2} instead of $R_0$. Since no unknown quantities are present in the definition of $\widetilde{R}_0$, the iterative procedure of Theorem \ref{th2} becomes deterministic. 

\section{Finite-sample performance} \label{sec:num_study}
In this section we will study the performance of the new shrinkage methods in comparison to a number of benchmarks. These methods are implemented in the R-package DOSPortfolio (see e.g. \cite{DOSPortfolio}) available on CRAN.
\subsection{Benchmark strategies and the setup of the simulation study}\label{sec:benchmark}
In this section we present the benchmark strategies and the scenarios to be simulated from. The results of the empirical illustration are provided in Section \ref{sec:empirical}. We will investigate the performance of the following eight dynamic trading strategies: 
\begin{description}
    \item[Strategy 1:] Bona fide shrinkage estimator of the GMV portfolio \eqref{w_BFi} with \eqref{hat_ri} following Theorem \ref{th1} where $r_0$ is estimated from the first sample as in \eqref{hat_r0};
    \item[Strategy 2:] Bona fide shrinkage estimator of the GMV portfolio \eqref{w_BFi-Psi} where $\hat{\Psi}^*_{i}$ is computed recursively as in Theorem \ref{th2} and $R_0$ is estimated from the first sample as in \eqref{hat_R0};
    \item[Strategy 3:] Bona fide shrinkage estimator of the GMV portfolio \eqref{w_BFi} with \eqref{hat_ri} following Theorem \ref{th1} where $r_0$ is recomputed as in \eqref{hat_r0i} when a new sample becomes available;
    \item[Strategy 4:] Bona fide shrinkage estimator of the GMV portfolio \eqref{w_BFi-Psi} where $\hat{\Psi}^*_{i}$ is computed recursively as in Theorem \ref{th2} and $R_0$ is recomputed as in  \eqref{hat_R0i} when a new sample becomes available;
    \item[Strategy 5:] Sample estimator of the GMV portfolio computed at each time $t_i$, $i=1,2,...,T$, i.e., $\psi_i = 1$ in \eqref{w_SHi} for $i=1,2,...,T$;
    \item[Strategy 6:] Target portfolio $\bb$ used at each time $t_i$, $i=1,2,...,T$, i.e., $\psi_i = 0$ in \eqref{w_SHi} for $i=1,2,...,T$;
    \item[Strategy 7:] One-period shrinkage estimator of the GMV portfolio \eqref{hw_SH} with \eqref{hpsi} reconstructed at each time $t_i$, $i=1,2,...,T$ with equally-weighted portfolio as the target portfolio.
     \item[Strategy 8:] Ledoit-Wolf nonlinear shrinkage estimator of the GMV portfolio (see, \cite{lw17}) computed at each time $t_i$, $i=1,2,...,T$;
\end{description}

The first four strategies are based on the theoretical results derived in Sections \ref{sec:GMV_non-overlapping} and \ref{sec:GMV_overlapping} where two different methods for constructing bona fide shrinkage estimators of the GMV portfolio weights are explored following the discussion after Theorems \ref{th1} and \ref{th2}, respectively. \textbf{Strategies 5 to 7} are benchmark strategies which are based on the traditional estimator of the GMV portfolio, the target portfolio, and on the one-period shrinkage estimator. The last \textbf{Strategy 8} is the recent state-of-the-art method of \cite{lw17}, which efficiently applies the nonlinear shrinkage estimator of the covariance matrix on the GMV portfolio weights.

Since the GMV portfolio is the solution of the portfolio optimization problem with the aim to minimize the portfolio variance, the relative loss in the out-of-sample variance is used as a performance measure in this comparison study which for the portfolio with the estimated weights $\hat{\bw}$ is expressed as:
\begin{equation}\label{oos-rel-loss}
\text{Relative loss}\,(\bw)= \frac{\hat{\bw}^\top \bSigma \hat{\bw}-V_{GMV}}{V_{GMV}}= \bOne^\top \bSigma^{-1} \bOne\hat{\bw}^\top \bSigma \hat{\bw}-1,
\end{equation}
where we use the formula for the global minimum variance $V_{GMV}$ given in \eqref{var-bb-gmv}.

In the simulation study, we will look at two investment horizons $T=10$ and $T=20$. For each segment we let $n_i=250$, which would correspond to an investor who rebalances the holding portfolio on a yearly basis over 10 or 20 years. The parameters of the listed below models are simulated according to $\bmu=(\mu_1,...,\mu_p)^\top$ with $\mu_i \sim U(-0.2,0.2)$ and the covariance matrix $\bSigma$ is configured such that 20\% of the eigenvalues are equal to 0.2, 40\% equal to one and 40\% equal to 4, whereas the eigenvectors are generated from the Haar distribution. Following this simulation setup $\bSigma$ will have the same spectral distribution for all considered values of the concentration ratio $c_n$.

Four different stochastic models for the data-generating process will be considered, which are listed below:
\begin{description}
    \item[Scenario 1: $t$-distribution]
    The elements of $\bx_t$ are drawn independently from $t$-distribution with $5$ degrees of freedom, i.e., $x_{tj}\sim t(5)$ for $j=1,...,p$, while $\by_t$ is constructed according to \eqref{obs_i}.
    \item[Scenario 2: CAPM] 
    The vector of asset returns $\by_t$ is generated according to the CAPM (Capital Asset Pricing Model), i.e.,
    $$\by_t = \bmu + \bol{\beta} z_t+ \bSigma^{1/2}\bx_t,$$
    with independently distributed $z_t \sim N(0, 1)$ and $\bx_{t} \sim N_p(\mathbf{0}, \mathbf{I})$. The elements of vector $\bol{\beta}$ are drawn from the uniform distribution, that is $\beta_i \sim U(-1,1)$ for $i=1,...,p$.
    \item[Scenario 3: CCC-GARCH model of \cite{bollerslev1990modelling}] The asset returns are simulated according to 
    $$\by_t | \bSigma_t \sim N_p(\bmu, \bSigma_t)$$
    where the conditional covariance matrix is specified by 
    $$\bSigma_t = \bD_t^{1/2} \bC \bD_t^{1/2}
    \quad \text{with} \quad \bD_t = \operatorname{diag}(h_{1,t}, h_{2,t}, ..., h_{p,t}),$$
    where
    $$
    h_{j,t} = \alpha_{j,0} + \alpha_{j,1} (\by_{j, t-1} - \bmu_j)^2 + \beta_{j,1} h_{j, t-1}, \text{ for } j=1,2,...,p, \text{ and } t=1,2,...,n_i,~i=1,...,T.
    $$
    The coefficients of the CCC model are sampled according to $\alpha_{j,1} \sim U(0,0.1)$ and $\beta_{j,1} \sim U(0.6,0.7)$ which implies that the stationarity conditions, $\alpha_{j,1} + \beta_{j,1} < 1$, are always fulfilled. The intercept $\alpha_{j,0}$ is thereafter chosen such that the unconditional covariance matrix is equal to $\bSigma$.
    \item[Scenario 4: VARMA model] 
    The vector of asset returns $\by_t$ is simulated according to a 
    $$
    \by_{t} = \bmu + \mathbf{\Gamma} \by_{t-1} + \bSigma^{1/2}\bx_t
    \quad \text{with} \quad 
    \bx_t\sim N_p(\mathbf{0},\mathbf{I})
    $$
    for $t=1,...,n_i$, $i=1,...,T$, where $\mathbf{\Gamma} = \operatorname{diag}(\gamma_1, \gamma_2,..., \gamma_p)$ where $\gamma_i \sim U(-0.9, 0.9)$ for $i=1,...,p$.
\end{description} 

\textbf{Scenario 1} and \textbf{Scenario 2} fulfill the conditions imposed on the data-generating model in Section \ref{sec:seq_shrinkage}. The application of both scenarios result in samples that consist of independent random vectors with finite $4+\varepsilon$, $\varepsilon>0$, moments. Furthermore, the covariance matrix possesses finite eigenvalues in \textbf{Scenario 1}, while it has an unbounded spectrum in \textbf{Scenario 2} (cf.,  \citet{fan2013large}). On the other side, the samples obtained following \textbf{Scenario 3} and \textbf{Scenario 4} consist of dependent observations. In \textbf{Scenario 3} the random vector are uncorrelated although a non-linear dependence is present in the time series structure of the model, while the elements of the samples obtained from \textbf{Scenario 4} are strongly linearly dependent. 

For each segment of the time partition we generate a new sample of $n=250$ observations, which is applied in the computation of $\hbmu$, $\hat{\bS}_{n_i}$, $\hat{\bS}_{N_i}$, $\hbw_{S;n_i}$, and $\hbw_{SH; n_{i}}$. As a target portfolio we use the equally weighted portfolio with the weights $\bb= \bOne_p/p$. The results of the simulation study are based on 5000 independent runs from which the average relative loss is computed for each scenario, strategy and several values of the concentration ratio $c$. 

\subsection{Performance of the trading strategies}\label{sec:sim_res}
Figures \ref{fig:1} to \ref{fig:4} present the results of the simulation study for $i=3, i=6$, and $i=10$ when $T=10$ and for $i=6, i=13$ and $i=20$ when $T=20$. Interestingly, the computed average loss show a similar behaviour independently of the data-generating model used to draw the samples. This observation also holds in the case of \textbf{Scenario 3} and \textbf{Scenario 4}, which by construction do not fulfill the assumptions imposed on the data-generating model in the derivation of the theoretical results. As such, one can conclude that the presence of non-linear dependence structure between the observation vectors or even strong linear dependence has only minor impact on the validity of the results derived in Theorem \ref{th1} and Theorem \ref{th2}.

The best performance is obtained for \textbf{Strategy 2}, \textbf{Strategy 3}, and \textbf{Strategy 4}, which are followed by \textbf{Strategy 1}. The differences between the computed values for \textbf{Strategies 2 to 4} are very small and are present at the fourth decimal. All dynamic estimation strategies considerably outperform the four considered benchmark strategies, independently of the scenario used to generate samples. On the third place we rank the nonlinear shrinkage estimator, i.e., \textbf{Strategy 8}, while the single-period shrinkage estimator is ranked on the place four.
Finally, we note that for \textbf{Scenarios 1-4} the traditional estimator performs better than the portfolio strategy based on the target portfolio, when the concentration ratio $c$ is smaller than $0.75$, while it produces extremely large values of relative losses, when $c$ approaches one. This observation becomes even more prominent in case of the \textbf{Scenario 4}, where a strong autocorrelation was employed. Here already for $c=0.5$ the target portfolio starts outperforming the traditional estimator. To this end, we conclude that the dynamic re-estimation of the relative loss of the target portfolio $\bb$ shows a significant improvement when non-overlapping samples are used and the concentration ratio $c$ is relative large. In contrast, the application of the dynamic re-estimation of the relative loss in the case of overlapping samples leads to the considerably large computation time without large improvements. Finally, the increase of the trading horizon $T$ has only a minor impact on the plots presented in Figures \ref{fig:1} to \ref{fig:4}. The larger value of $T$ slightly reduces the computed average relative losses in the case of \textbf{Strategy 1}, while they become a slightly large for the single-period shrinkage approach.

\begin{figure}[h!]
\center
\includegraphics[width=\textwidth,keepaspectratio]{./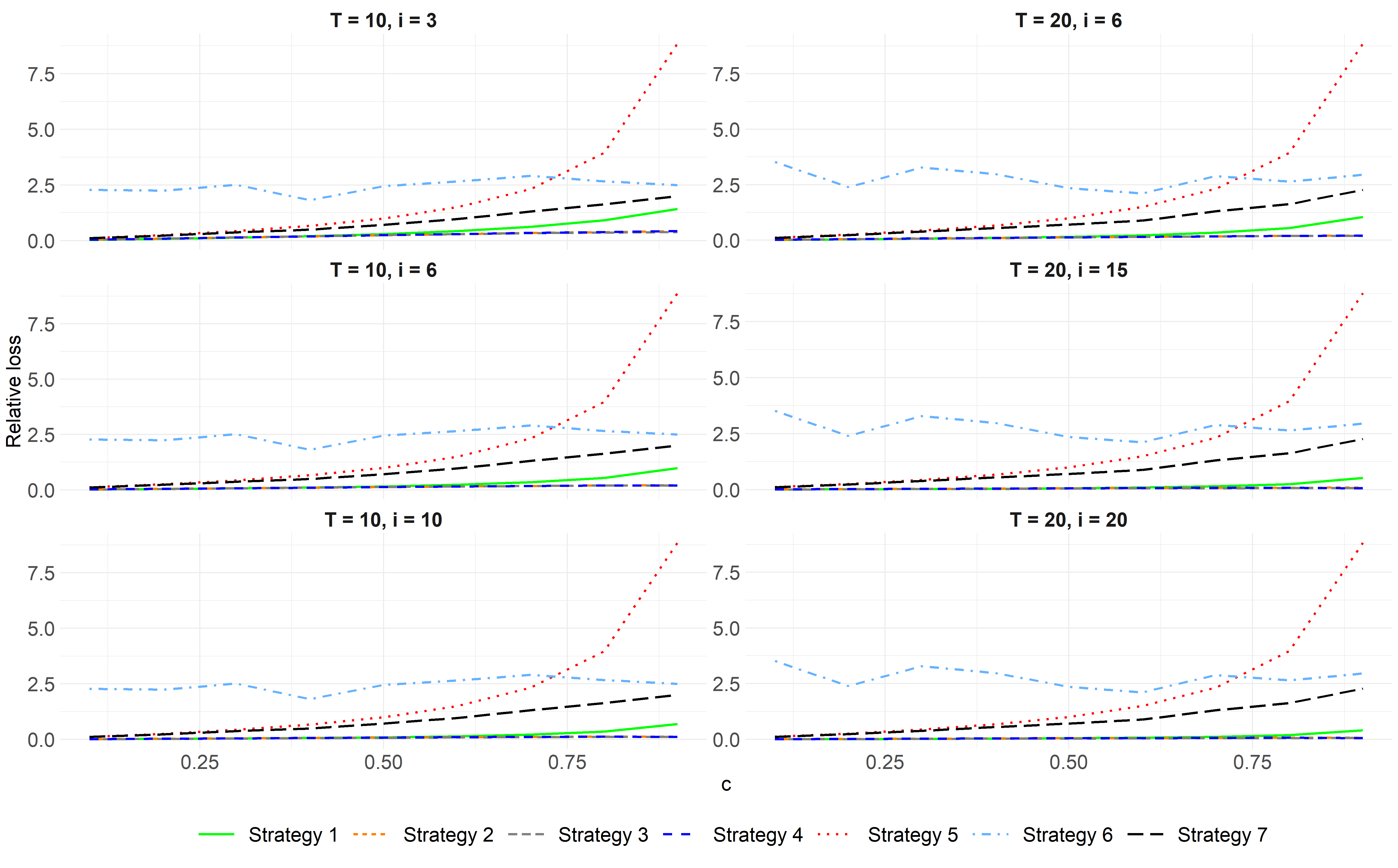}
\caption{Relative losses for the different time steps $i$ and investment horizons $T$. Data were simulated following \textbf{Scenario 1} for different values of $c$.}
\label{fig:1}
\end{figure}

\begin{figure}[h!]
\center
\includegraphics[width=\textwidth,height=\textheight,keepaspectratio]{./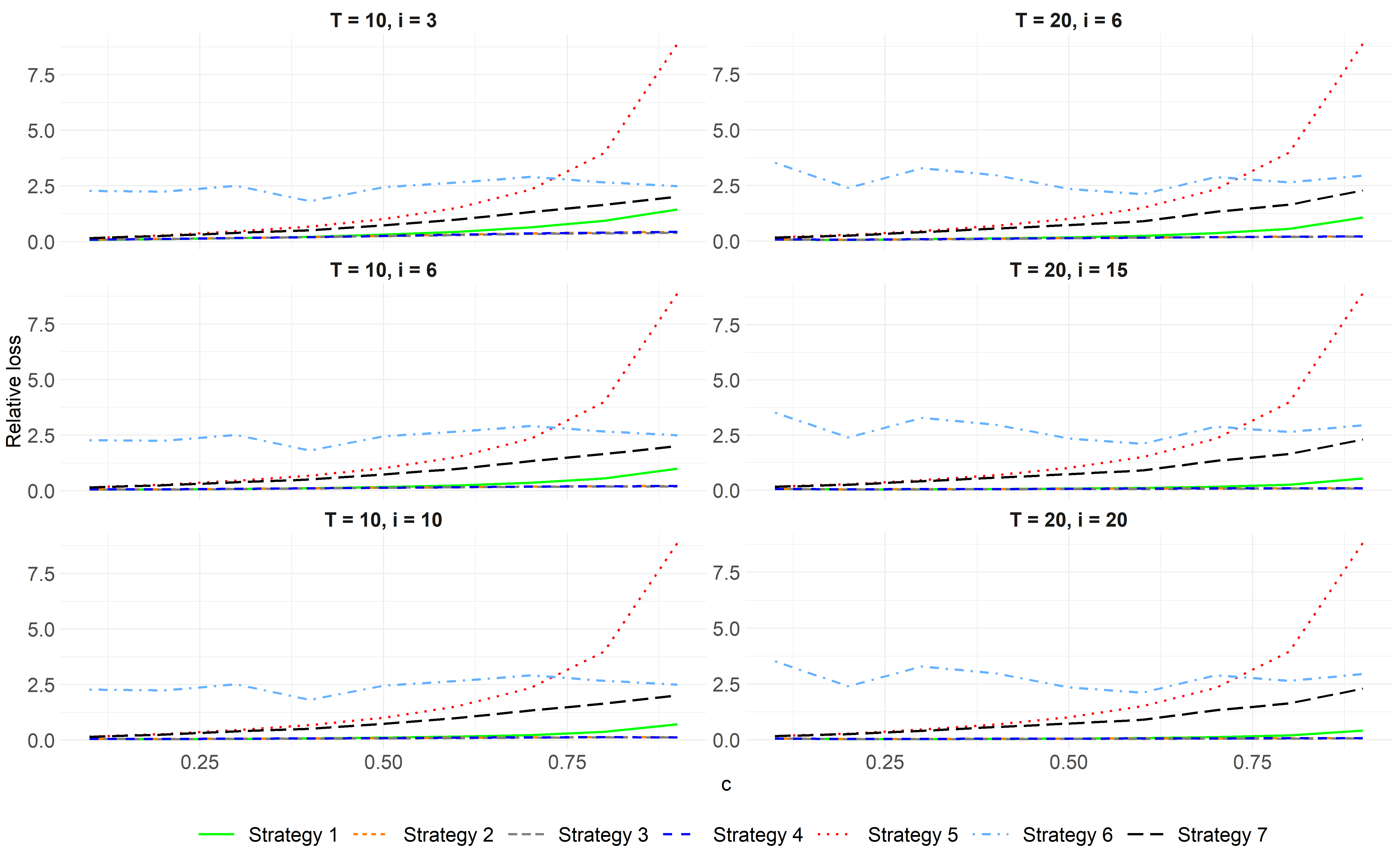}
\caption{Relative losses for the different time steps $i$ and investment horizons $T$. Data were simulated following \textbf{Scenario 2} for different values of $c$.}
\label{fig:2}
\end{figure}

\begin{figure}[h!]
\center
\includegraphics[width=\textwidth,height=\textheight,keepaspectratio]{./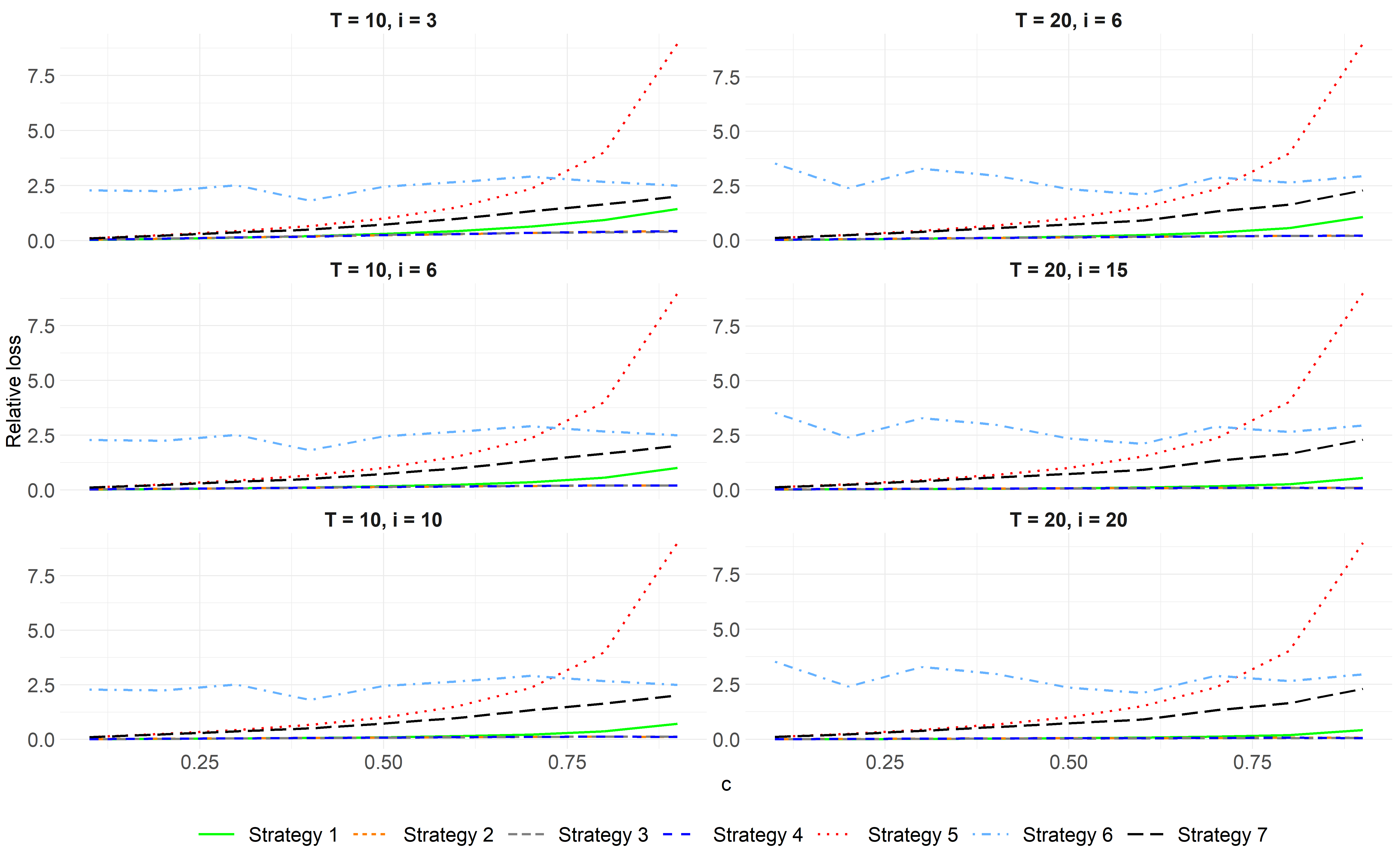}
\caption{Relative losses for the different time steps $i$ and investment horizons $T$. Data were simulated following \textbf{Scenario 3} for different values of $c$.}
\label{fig:3}
\end{figure}

\begin{figure}[h!]
\center
\includegraphics[width=\textwidth,height=\textheight,keepaspectratio]{./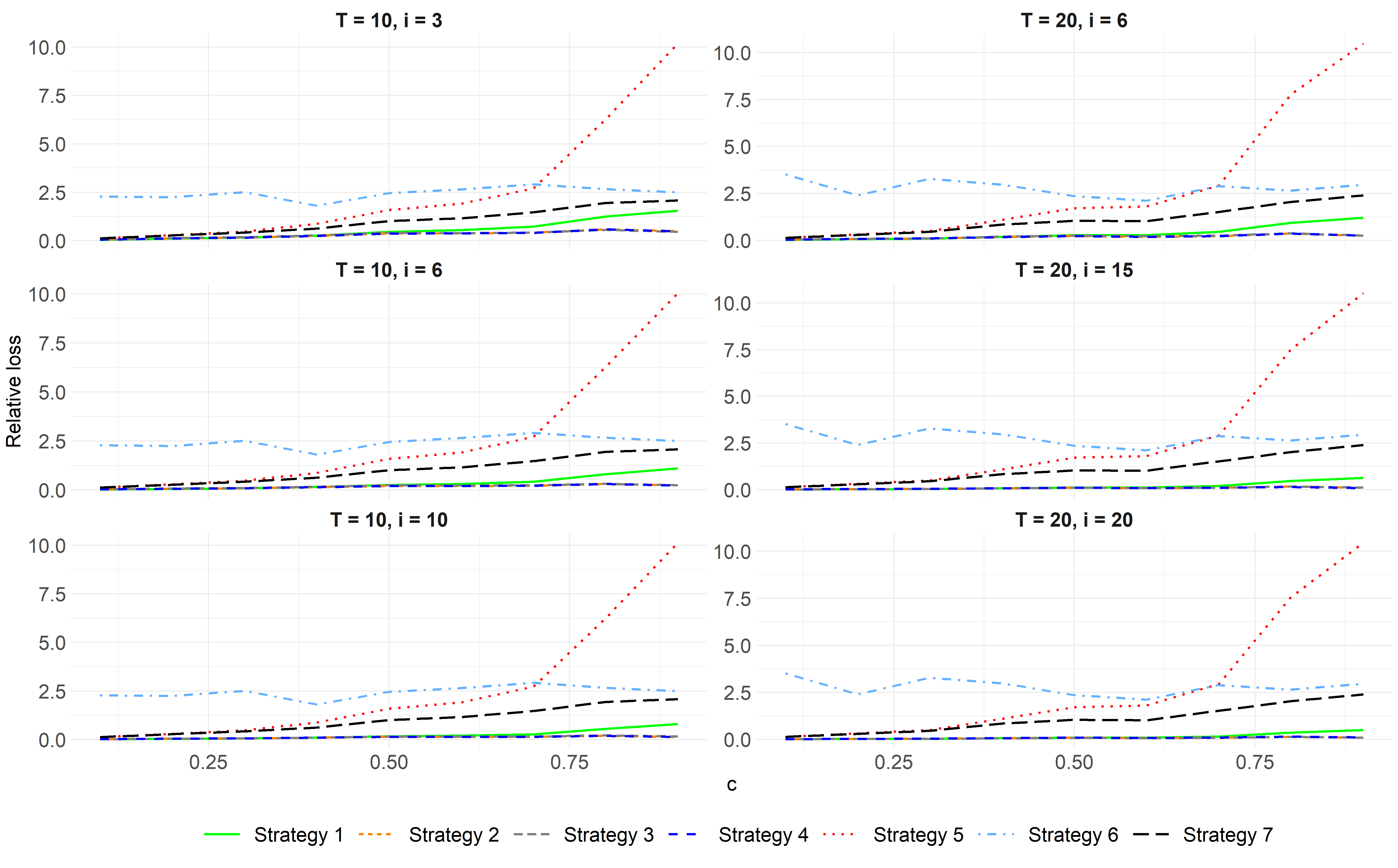}
\caption{Relative losses for the different time steps $i$ and investment horizons $T$. Data were simulated following \textbf{Scenario 4} for different values of $c$.}
\label{fig:4}
\end{figure}

\section{Application to stocks included in S\&P 500}\label{sec:empirical}

In this section we will apply the suggested new approaches and the benchmark strategies presented in Section \ref{sec:num_study} on daily market data.

\subsection{Data description}\label{sec:data}
We will use daily returns on $348$ stocks included in the S\&P500 index from March 2011 up January March 2021. The stocks were chosen by the availability of their price data during the trading period. Two portfolios of size $p=200$ and $p=150$ are considered, where $200$ stocks are chosen randomly from $348$ stocks included in the empirical study for the first portfolio, while the second one contains the first $150$ stocks in the alphabetic order of the former portfolio. We set $c=0.6$ or $c=0.8$ and will therefore use $n_i=250$ trading days for each year $i$. There will be $T=8$ reallocation points.

Figure \ref{fig:5} presents the descriptive statistics computed for the univariate series of the randomly chosen $200$ stocks. Namely, the first four (centralized) sample moments of the univariate time series of asset returns are depicted in the figure, which are computed for the whole period from March 2011 up until March 2021. The returns are on average positive over this period, while the sample skewness is on average negative with the sample distribution to be skewed to the left and with a few assets having very large (negative) values. The largest negative skewness corresponds to the Mondelez International, Inc. (ticker MDLZ) which is also among those assets with highest kurtosis. Finally, we note that the computed kurtosis are relatively large for the considered stocks, showing that the assumption of normality might not be fulfilled for the considered data.

\begin{figure}[h!]
\center
\includegraphics[width=\textwidth,height=\textheight,keepaspectratio]{./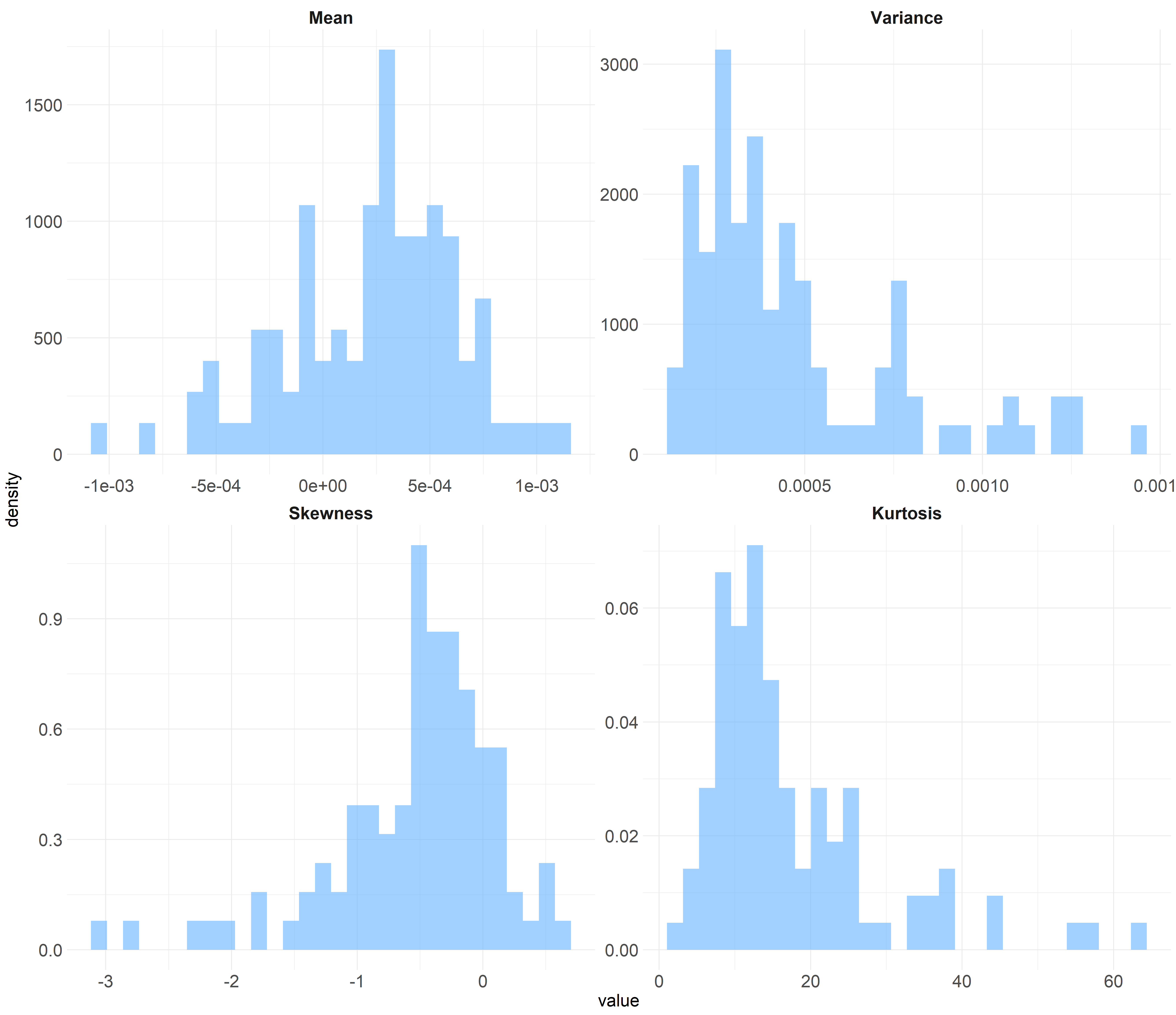}
\caption{\small First four moments for the univariate time series of the asset returns from 200 randomly selected assets of the S\&P500 index. The data consist of daily returns from March of 2011 to January of 2021.}
\label{fig:5}
\end{figure}

\subsection{Results of the empirical illustration}\label{sec:emp_results}
A consequence of the exponential weighting schemes to which the shrinkage estimators belong to, is that the portfolio structure changes by smaller increments. As a result, we expect that the portfolio turnover of the dynamic (estimated) GMV portfolios based on the introduced shrinkage approaches to be smaller in comparison to the unconstrained strategy ($\psi_i=1$), but to be larger in comparison to the static portfolio choice ($\psi_i=0$). For each strategy $k$ introduced in Section \ref{sec:benchmark}, let $\bw^{(k)}$ denote the vector of the weights induced by the $k$th strategy and let $w_{i,j}^{(k)}$ stand for the weight for the $j$-th asset after the $i$-th portfolio rebalancing. 

For each strategy $k$ the turnover is defined by (see e.g. \citet{golosnoy2019exponential})
\begin{equation}\label{turn}\centering
\text{Turnover}^{(k)}=\frac{1}{T}\sum_{i=1}^T ||\bw^{(k)}_i - \bw^{(k)}_{i-1}||_1.
\end{equation}
The turnover can be seen as the cost for transitioning from one portfolio to another, given that the transaction costs are constant for all assets and time periods. The amount of turnover will affect the development of wealth of the portfolio. Moreover, following \citet{golosnoy2019exponential}, we will compute the average absolute values of holding portfolio weights, the average minimum and maximum portfolio weights, the average sum of negative weights in the portfolio, and the average fraction of negative weights in the portfolio as further performance measures. They are given by

\vspace{-0.5cm}

\begin{align}\label{portmes1}
    |\bw^{(k)}| &= \frac{1}{Tp} \sum_{i=1}^T\sum_{j=1}^p |w_{i,j}^{(k)}|,\\\label{portmes2}
    \max \bw^{(k)} &= \frac{1}{T} \sum_{i=1}^T \left( \max_j w_{i,j}^{(k)}\right),\\\label{portmes3}
    \min \bw^{(k)} &= \frac{1}{T} \sum_{i=1}^T \left( \min_j w_{i,j}^{(k)}\right),
\\\label{portmes4}
\bw_i^{(k)} \mathbbm{1}(\bw_i^{(k)} < 0) &= \frac{1}{T} \sum_{i=1}^T\sum_{j=1}^p w_{i,j}^{(k)}\mathbbm{1}(w_{i,j}^{(k)} < 0), \\\label{portmes5}
    \mathbbm{1}(\bw_i^{(k)} < 0) &= \frac{1}{Tp} \sum_{i=1}^T\sum_{j=1}^p \mathbbm{1}(w_{i,j}^{(k)} < 0).
\end{align}
Moreover, we also consider important classical portfolio performance measures: total excess portfolio return, out-of-sample variance and average Sharpe ratio.
The computed values of the introduced performance measures are summarized in Table \ref{tab:1} for \textbf{Strategies 1 to 8} over the entire period. The corresponding values for \textbf{Strategy 6} are also included in the table but many of its entries are obviously equal to zero since the weights are all equal.  
In fact, \textbf{Strategy 6} has the smallest values of them all for the first 5 performance measures. Similarly, no negative weights are present in the target portfolio since each asset will be assigned a weight of $1/p$. Due to this fact we have chosen to highlight the second to best strategy, which is \textbf{Strategy 8}. If the investor is interested in the portfolio weights, then \textbf{Strategy 8} is the best as it takes the smallest positions, short the smallest amount of stocks and so forth. The average maximum weights, mean of all shorted and the proportion shorted seem to increase slightly with $p$. It seems as though when $p$ increases, \textbf{Strategy 8} decrease in the other metrics regarding the portfolio weights. The same is not true for the other strategies, especially \textbf{Strategy 5}. This can most likely be attributed to the nonlinear shrinkage of the eigenvalues, which seem to have a direct impact on the magnitude and direction of the weights. None of the other strategies has the same flexibility and can not compete with it.
The other strategies seem to have natural ordering in terms of the first five metrics. \textbf{Strategy 1} through \textbf{4} are better than \textbf{Strategy 7}, which in turn is better than \textbf{Strategy 5}.
One interesting feature of the re-estimation strategies, \textbf{Strategy 3} and \textbf{4}, does not imply a large change in the metrics. One of the largest changes can be seen in the mean out of the shorted weights for \textbf{Strategy 4} when $p=200$. When we re-estimate the initial relative loss, we change the shrinkage coefficients and all dynamic GMV portfolios thereafter. For the portfolio weights, that seem to imply an increase in certain metrics.

The last performance measures, seen in the four last rows of Table \ref{tab:1} are based on the portfolio return.  \textbf{Strategy 3} and \textbf{Strategy 4} have the largest portfolio return. Since we only present up to 4 decimals both are highlighted. In reality, they differ on the sixth decimal, which can of course make a big difference depending on the capital invested. In contrast, \textbf{Strategy 1} and \textbf{Strategy 2} beat all of other strategies in terms of portfolio variance when $p=150$ and \textbf{Strategy 2} has the least amount of variance when $p=200$. We can also see that as $p$ increases, the variances of \textbf{Strategies 5 to 8} increase. The dynamic shrinkage becomes an increasingly important feature in this context. It does not suffice to optimize for the out-of-sample variance, which \textbf{Strategies 1 to 4} and \textbf{Strategy 7} does, but also the dynamic nature of the investment process. These decrease with $p$. In turn, for $p=150$ \textbf{Strategy 4} gives the highest Sharpe Ratio and for $p=200$ \textbf{Strategy 2} gives the highest Sharpe Ratio. From the out-of-sample variance it is not surprising to see that \textbf{Strategy 5} provides the worst Sharpe ratio.  

\begin{landscape}
\begin{table}[ht]
\centering
\caption{Performance measures over $T=8$ periods of portfolio rebalancing. The strategy which is most performant (in terms of its measurement) is highlighted in bold on each row.Each Strategy is abbreviated "S.".} 
\label{tab:1}
\begin{adjustbox}{scale=1}
\begin{tabular}{|l|l|rrrrrrrr|}
  \hline
Type & Portfolio size & S. 1 & S. 2 & S. 3 & S. 4 & S. 5 & S. 6 & S. 7 & S. 8 \\ 
  \hline
  \multirow{2}{*}{$|\bw^{(k)}|$} 
    & 150 & 0.0245 & 0.0248 & 0.0255 & 0.0265 & 0.0477 & 0.0067 & 0.0298 & \textbf{0.0209} \\ 
    \cline{2-10}
    & 200 & 0.0218 & 0.0228 & 0.0224 & 0.0248 & 0.0627 & 0.0050 & 0.0300 & \textbf{0.0173} \\
  \hline
  \multirow{2}{*}{$\max \bw^{(k)}$} 
    & 150 & 0.1107 & 0.1120 & 0.1568 & 0.1633 & 0.2434 & 0.0067 & 0.1537 & \textbf{0.0836} \\ 
    \cline{2-10}
    & 200 & 0.1736 & 0.1812 & 0.1642 & 0.1823 & 0.5628 & 0.0050 & 0.2633 & \textbf{0.0865} \\
  \hline
  \multirow{2}{*}{$\min \bw^{(k)}$} 
    & 150 & -0.1155 & -0.1171 & -0.1018 & -0.1065 & -0.1882 & 0.0067 & -0.1134 & \textbf{-0.0652} \\ 
    \cline{2-10}
    & 200 & -0.0938 & -0.0983 & -0.0940 & -0.1052 & -0.3338 & 0.0050 & -0.1554 & \textbf{-0.0523} \\
  \hline
  \multirow{2}{*}{$\bw_i^{(k)} \mathbbm{1}(\bw_i^{(k)} < 0)$} 
    & 150 & -1.3387 & -1.3617 & -1.4102 & -1.4903 & -3.0742 & 0.0000 & -1.7380 & \textbf{-1.0648} \\
    \cline{2-10}
    & 200 & -1.6802 & -1.7767 & -1.7367 & -1.9807 & -5.7679 & 0.0000 & -2.4983 & \textbf{-1.2292} \\
  \hline
  \multirow{2}{*}{$\mathbbm{1}(\bw_i^{(k)} < 0)$}  
    & 150 & 0.4592 & 0.4592 & 0.4392 & 0.4442 & 0.4750 & 0.0000 & 0.4392 & \textbf{0.4050} \\
    \cline{2-10}
    & 200 & 0.4513 & 0.4519 & 0.4469 & 0.4656 & 0.4863 & 0.0000 & 0.4469 & \textbf{0.4250} \\ 
  \hline
  \multirow{2}{*}{$\bar{\by}_{\bw^{(k)}}$} 
    & 150 & 0.0003 & 0.0003 & \textbf{0.0004} & \textbf{0.0004} & 0.0001 & 0.0003 & 0.0002 & 0.0001 \\
    \cline{2-10}
    & 200 & 0.0003 & 0.0003 & \textbf{0.0004} & \textbf{0.0004} & -0.0001 & 0.0003 & 0.0001 & 0.0002 \\
  \hline
  \multirow{2}{*}{$\sigma_{\bw^{(k)}}$} 
    & 150 & \textbf{0.0062} & \textbf{0.0062} & 0.0083 & 0.0083 & 0.0123 & 0.0120 & 0.0112 & 0.0099 \\
    \cline{2-10}
    & 200 & 0.0055 & \textbf{0.0054} & 0.0080 & 0.0078 & 0.0156 & 0.0122 & 0.0124 & 0.0102 \\ 
  \hline
  \multirow{2}{*}{$\operatorname{SR}^{(k)}$} 
    & 150 & 0.0485 & 0.0486 & 0.0513 & \textbf{0.0525} & 0.0107 & 0.0250 & 0.0143 & 0.0138 \\
    \cline{2-10}
    & 200 & 0.0510 & \textbf{0.0514} & 0.0464 & 0.0496 & -0.0067 & 0.0208 & 0.0069 & 0.0180 \\
  \hline
  \multirow{2}{*}{$\text{Turnover}^{(k)}$} 
    & 150 & \textbf{0.0462} & 0.0507 & 1.4516 & 1.5016 & 8.6568 & 0.0000 & 5.2351 & 3.1340 \\ 
    \cline{2-10}
    & 200 & \textbf{0.1193} & 0.1372 & 1.9099 & 2.0772 & 15.5228 & 0.0000 & 7.1262 & 3.4561 \\
   \hline
\end{tabular}
\end{adjustbox}
\end{table}

\end{landscape}

The last performance measure in the last row is turnover.
As one can expect, \textbf{Strategy 5} is worst, generating the largest turnover. This classic strategy has the most flexibility. However, this flexibility often leads to unnecessary reconstruction of the holding portfolio. This becomes apparent in more extreme case when $p/n$ is close to one.
The smallest turnover is given by \textbf{Strategy 1}. The dynamic shrinkage force small movements between reallocations. However, in \textbf{Strategy 3} and \textbf{Strategy 4} we change the perception of the initial relative loss. This implies that algorithm change its opinion of what the optimal weights should have been. It causes the weights to change more relative to \textbf{Strategy 1} and \textbf{Strategy 2}. The re-estimation caused a larger return for $p=150$, a higher variance but also, in turn, a higher turnover. The dynamic shrinkage approach does not optimize towards decreasing turnover but it is a consequence of enforcing small movements. This also implies that other strategies can be close to the dynamic shrinkage approach. \textbf{Strategy 8} shows promise of this feature. When $p$ increases the nonlinear shrinkage estimator becomes more stable (between reallocation periods) and increases its turnover slightly. 

\begin{figure}[h!]
\center
\includegraphics[width=\textwidth,height=\textheight,keepaspectratio]{./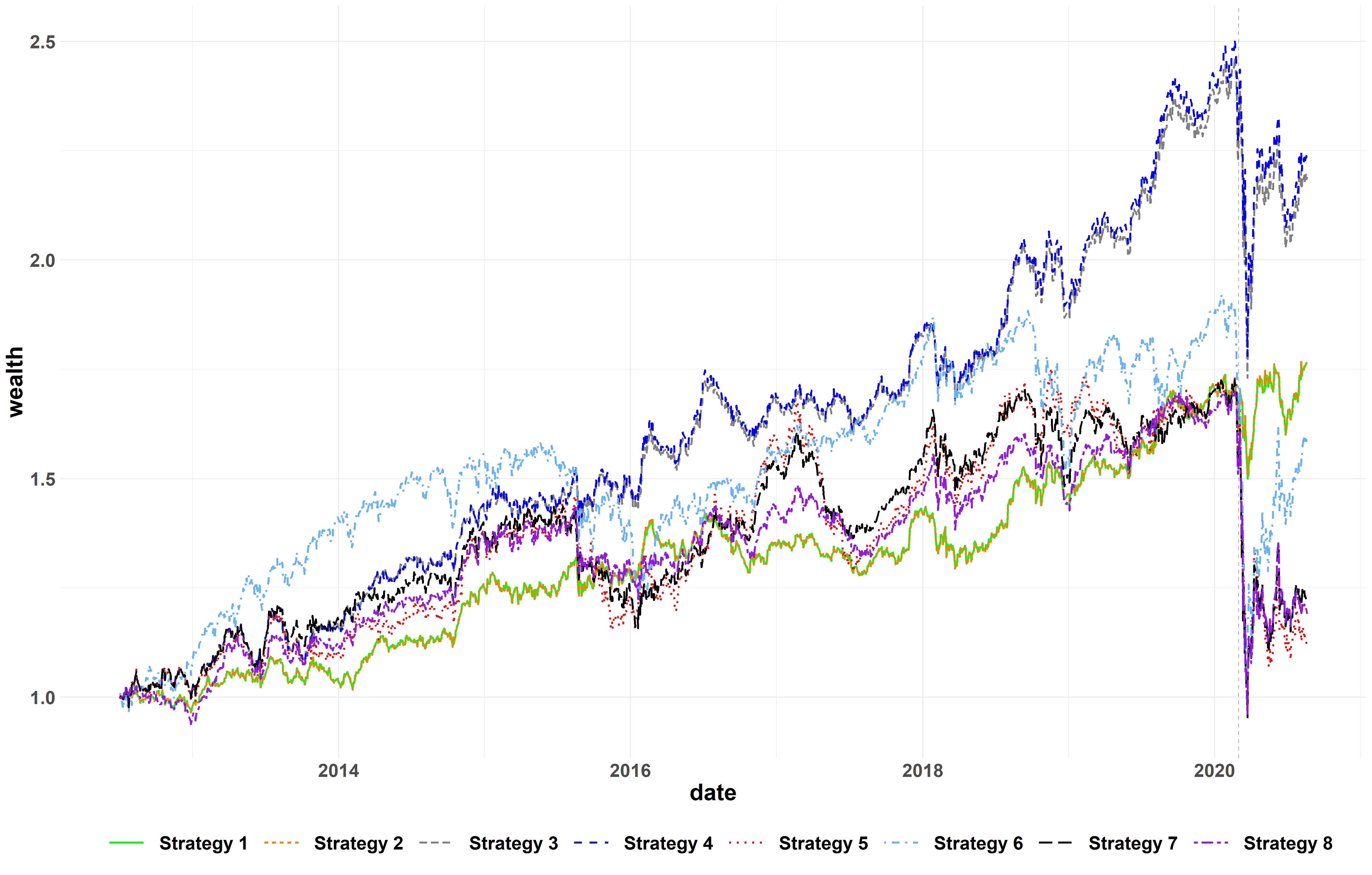}
\caption{Development of the investor wealth based on the dynamic trading strategies described in Section \ref{sec:benchmark}. In this figure the portfolio size is equal to 150.}
\label{fig:6}
\end{figure}

\begin{figure}[h!]
\center
\includegraphics[width=\textwidth,height=\textheight,keepaspectratio]{./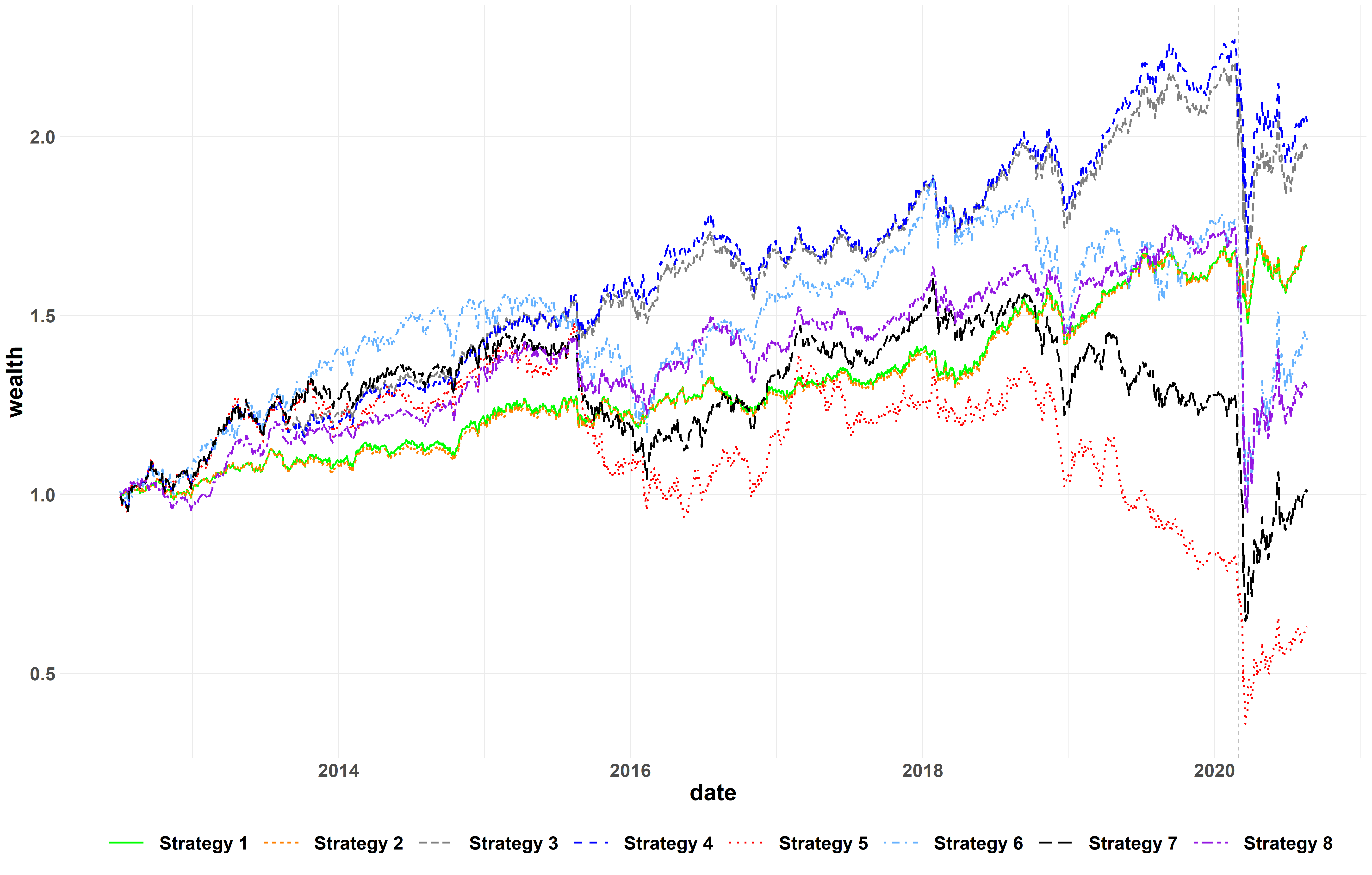}
\caption{Development of the investor wealth based on the dynamic trading strategies described in Section \ref{sec:benchmark}. In this figure the portfolio size is equal to 200.}
\label{fig:7}
\end{figure}


The development of the investors wealth for the eight trading strategies introduced in Section \ref{sec:benchmark} over 10 years is depicted in Figures \ref{fig:6} and \ref{fig:7}. The wealth is computed according to a buy-and-hold strategy until next reallocation period. That is, given that $n_i$ days has passed, we use these to estimate the portfolio weights and then rebalance the holding portfolio to the new portfolio according to the different strategies. The wealth is accumulated on a daily basis which corresponds to the frequency of data used to construct the portfolio.

The portfolios are the same as presented in \ref{tab:1}, e.g., one set of assets with $p=150$, $p/n_i=0.6$ and another with $p=200$, $p/n_i=0.8$.
In both scenarios the trading strategies based on the dynamic shrinkage approach using the re-estimation technique produce the largest wealth at the end of the investing period.
In Figure \ref{fig:6} where $p=150$ the worst is \textbf{Strategy 5} in terms of the wealth. However, there is a number of strategies that have close to the same performance. These are \textbf{Strategy 7} and \textbf{Strategy 8}. Throughout the whole period they are very close to each other where \textbf{Strategy 8} deviates the most from the other two. The next grouping is \textbf{Strategy 1}, \textbf{Strategy 2} and \textbf{Strategy 6}, the equally weighted portfolio. \textbf{Strategy 6} provides slightly less increase in wealth. The other two are very close in terms of their final wealth. They also seem to move in tandem. Their wealth is similar but from \ref{tab:1} we can see that \textbf{Strategy 1}, \textbf{Strategy 2} perform slightly different. The classical benchmark, \textbf{Strategy 6}, starts of accumulating a lot of wealth, but decrease just prior to 2016. In this figure, it also becomes clear why \textbf{Strategy 1} and \textbf{Strategy 2} had the smallest variance in Table \ref{tab:1}. They vary very little in contrast to other strategies.

In Figure \ref{fig:7} where $p=200$, we have a slightly different scenario. \textbf{Strategy 3} and \textbf{4} still provide the largest final wealth and \textbf{Strategy 1} and \textbf{2} second to that. Thereafter there seem to be a clearer separation between the rest in terms of final wealth. \textbf{Strategy 6} is still the best out of \textbf{Strategies 5 through 8}. However, \textbf{Strategy 8} is very close to deliver the same performance. The traditional GMV portfolio takes on both large turnover and does not provide any decent result. It almost cuts the wealth by half in the end of this period. The last strategy, \textbf{Strategy 7} is able to catch up and ends on a positive note.  

All portfolios are hit quite heavily by COVID in the early 2020, which is indicated by a dashed line on the 1st of March, 2020\footnote{This is of course somewhat arbitrary since it is hard to specify a certain day that COVID hits the market.}. Some portfolios are quick to adapt to the event and other are not. However, all portfolios seem to experience a very sharp increase in wealth post COVID. This can most likely be attributed to the very bullish scenario which was caused by banks all over the world pushing capital into the market. In Table \ref{tab:2} we show the largest negative difference of the strategies wealth throughout the whole period. We can see that \textbf{Strategy 2} has the smallest decrease in wealth between days during this period where \textbf{Strategy 1} comes in second. These strategies seem to have been robust against the COVID crisis. Thereafter, \textbf{Strategy 3} and \textbf{Strategy 4} have the smallest decrease. However, these are almost twice the size. The rest of the strategies are on a similar level. They all exhibit close to a drop of $0.2$ units in wealth.

\begin{table}[ht]
\centering
\caption{The largest difference (most negative) in wealth between days throughout the whole period. The strategy which is most performant (in terms of its measurement) is highlighted in bold on each row. Each Strategy is abbreviated "S.".} 
\label{tab:2}
\begin{tabular}{lrrrrrrrr}
  \hline
size & S. 1 & S. 2 & S. 3 & S. 4 & S. 5 & S. 6 & S. 7 & S. 8 \\ 
  \hline
150 & -0.0586 & \textbf{-0.0580} & -0.1180 & -0.1174 & -0.1975 & -0.2200 & -0.2048 & -0.1744 \\ 
  200 & -0.0469 & \textbf{-0.0442} & -0.1226 & -0.1190 & -0.2701 & -0.2266 & -0.2482 & -0.2009 \\ 
   \hline
\end{tabular}
\end{table}

These results are in line with the previous empirical findings of \citet{bodnar2020tests} who document that the equally weighted portfolio performs well in the stable period on the capital market, but its performance is very bad during the turbulent periods. To conclude, all four of the proposed dynamic shrinkage strategies show impressively good performance over the state-of-the-art static portfolios especially in case when $p$ becomes close to $n_i$.

\section{Summary}\label{sec:summary}
In many practical situations an investor faces the problem of portfolio reallocation. Especially when new data arrive from the market after the portfolio was built. We deal with this challenging task by developing several dynamic optimal shrinkage estimators for the weights of the GMV portfolio. In the derivation of the theoretical findings, new results in random matrix theory are deduced which allow us to obtained optimal shrinkage estimator in two important cases. Optimal portfolio estimators using with and without overlapping samples. In the case of non-overlapping samples, the investor uses the data of asset returns after the last reconstruction of the portfolio, while the whole data would be used in the case of overlapping samples. It is remarkable that the two settings require different theoretical results in random  matrix theory and they result in quite different optimal shrinkage intensities. Moreover, only minor distribution assumption are imposed on the data-generating process, like the existence of $4+\varepsilon$, $\varepsilon>0$ moments and the existence of a location and scale family. We want to emphasize that the covariance matrix might even have an unbounded spectrum. 

The results of the simulation study show that the dynamic shrinkage procedures derived in the paper are robust against violations of the model assumptions. In particular, we conclude based on the results of the simulation study, that the performance of the suggested dynamic approach will not be strongly influenced when the asset returns are generated from a multivariate GARCH model and from a VARMA model. Although, both multivariate times series model assume that the asset returns are time dependent, it has a minor influence the suggested trading strategies. Finally, we apply the new approaches to real data of returns on stocks included in the S\&P 500 index and compare them with several benchmark approaches. These consists of investing into the target portfolio and the sample GMV portfolio amongst others. Several performance measures are considered and it is shown that the dynamic shrinkage portfolio constructed by using non-overlapping samples possesses the best performance in terms of the turnover. On the other hand, the dynamic shrinkage portfolio constructed by using overlapping yield the best return.

The dynamic strategies based non-overlapping sample are simple to implement and they provide sometimes drastically less turnover in comparison to the benchmark approaches. Although the approaches based on the overlapping estimators are harder to implement, they sometimes increase the return by twice the amount of the non-overlapping strategies. Furthermore, the overlapping estimators only require that the sample size is larger than the portfolio dimension in the first period. The non-overlapping approaches demands that the sample size is larger than the portfolio dimension at each reconstruction of the portfolio.

No portfolio is ever static. Making optimal transitions are therefore of great interest to any investor. These results provide a fully data-driven dynamic approaches how the GMV portfolio can be rebalanced. In many practical applications the investors might want to have more assets in their portfolios than the available sample size. This demands a special attention since the sample covariance matrix is singular and its inverse does no longer exist. This challenging problem has not been treated in the paper and is left for future research.

\section*{Acknowledgement} 
This research was partly supported by the Swedish Research Council (VR) via the project ''Bayesian Analysis of Optimal Portfolios and Their Risk Measures''


{\footnotesize
\bibliography{references}
}

\section{Appendix: Proofs}

In this section the proofs of the theoretical results are given. We first state several lemmas which will be used in the proofs of the theorems.

For any integer $n>2$, we define
\begin{equation}
\bV_{n}=\frac{1}{n-1}\bX_{n}\left(\bI_{n}-\frac{1}{n}\bOne_{n}\bOne_{n}^\top\right)\bX_{n}^\top
\quad \text{and} \quad
\widetilde{\bV}_{n}=\frac{1}{n-1}\bX_{n}\bX_{n}^\top,
\end{equation}
where $\bX_{n}$ is given in \eqref{obs_i} for the special case $n=n_i$. Hence,
\[\bS_{n}=\bSigma^{1/2}\bV_{n}\bSigma^{1/2}
=\bSigma^{1/2}\widetilde{\bV}_{n}\bSigma^{1/2}
-\frac{n}{n-1}\bSigma^{1/2}\bar\bx_{n}\bar\bx_{n}^\top\bSigma^{1/2}\]
with $\bar\bx_{n}=\frac{1}{n}\bX_{n}\bOne_{n}
=\bSigma^{-1/2}\bar\by_n$.\\

The statement of Lemma \ref{lem:lem1} is derived as Lemma 5.3 in \citet{bodnarokhrinparolya2020}
\begin{lemma}\label{lem:lem1}
    Let $\bxi$ and $\btheta$ be two nonrandom vectors with bounded Euclidean norms. Then it holds that
    \begin{align}
        \left|\bxi^\top \bV_{n}^{-1} \btheta - (1-c)^{-1}\bxi^\top\btheta  \right| &\stackrel{a.s.}{\rightarrow} 0 \label{eqn:lem1eq1}\\
        \left|\bxi^\top \bV_{n}^{-2} \btheta - (1-c)^{-3}\bxi^\top\btheta  \right| &\stackrel{a.s.}{\rightarrow} 0 \label{eqn:lem1eq4}
    \end{align}
    for $p/n \rightarrow c \in (0, 1)$
    as $n\rightarrow \infty$.
\end{lemma}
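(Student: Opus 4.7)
The plan is to prove both limits via the resolvent (Stieltjes transform) framework for sample covariance matrices. The first step is to reduce the centered matrix $\bV_n$ to an uncentered one. Since the projector $\bI_n-\frac{1}{n}\bOne_n\bOne_n^\top$ is idempotent of rank $n-1$, it factors as $\bU\bU^\top$ with $\bU\in\real^{n\times(n-1)}$ having orthonormal columns; hence $\bV_n=\frac{1}{n-1}\bZ_n\bZ_n^\top$ with $\bZ_n=\bX_n\bU$, whose columns remain independent with zero mean, identity covariance, and $4+\varepsilon$ moments. Bai--Yin's theorem under the $4+\varepsilon$ assumption then guarantees that the smallest eigenvalue of $\bV_n$ is a.s.\ at least $(1-\sqrt{c})^2-o(1)>0$, so both $\bV_n^{-1}$ and $\bV_n^{-2}$ are well defined for large $n$ and their bilinear forms are uniformly bounded in $n$.

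Next, I would invoke the bilinear deterministic equivalent for the resolvent $\bR_n(z)=(\bV_n-z\bI_p)^{-1}$: for deterministic $\bxi,\btheta$ with bounded norms and $z$ outside the limiting spectrum,
\begin{equation*}
\left|\bxi^\top\bR_n(z)\btheta - m_n(z)\,\bxi^\top\btheta\right| \stackrel{a.s.}{\to} 0,
\end{equation*}
where $m_n(z)$ is a deterministic sequence converging to the Stieltjes transform $m(z)$ of the Marchenko--Pastur law with ratio $c$. The standard proof expands the bilinear form using a leave-one-out (Sherman--Morrison) decomposition column-by-column, writes the deviation as a sum of martingale differences, and applies Burkholder's inequality; the fourth-moment ingredient in $4+\varepsilon$ delivers the summability needed to invoke Borel--Cantelli, yielding almost-sure convergence without any distributional assumption.

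With this tool the first statement follows by taking $z=0$: since the Marchenko--Pastur law is supported on $[(1-\sqrt{c})^2,(1+\sqrt{c})^2]$ for $c\in(0,1)$, a direct integration gives $m(0)=\int\lambda^{-1}dF_{MP}(\lambda)=1/(1-c)$. Because the spectrum stays bounded away from zero a.s., the resolvent and its deterministic equivalent are continuous at $z=0$, so \eqref{eqn:lem1eq1} follows. For \eqref{eqn:lem1eq4}, I use $\frac{d}{dz}\bR_n(z)=\bR_n(z)^2$, so that the bilinear form at $z=0$ equals $\bxi^\top\bV_n^{-2}\btheta$. Differentiating under the integral gives $m'(0)=\int\lambda^{-2}dF_{MP}(\lambda)=1/(1-c)^3$. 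The passage from convergence of $\bxi^\top\bR_n(z)\btheta$ to convergence of its derivative at $z=0$ is justified via Cauchy's integral formula on a small circle around $0$ contained in the resolvent set (valid since $\lambda_{\min}(\bV_n)$ is uniformly bounded below), or alternatively by repeating the martingale argument directly for the square $\bR_n(z)^2$.

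The main obstacle is the bilinear concentration step itself. Establishing that bilinear forms in the resolvent concentrate to their deterministic equivalent under only $4+\varepsilon$ moments requires careful martingale bookkeeping: each leave-one-out step produces a rank-one perturbation whose contribution is controlled by Sherman--Morrison, and the increments must be bounded via quadratic-form concentration of the type $|\bx_i^\top\bA\bx_i-\tr\bA|$ and $|\bx_i^\top\bA\by|$ where $\bA$ depends on the remaining columns but is uniformly bounded in operator norm. Handling the rank-one centering correction $\bV_n=\widetilde{\bV}_n-\frac{n}{n-1}\bar{\bx}_n\bar{\bx}_n^\top$, if one prefers not to use the orthogonal reduction, is a minor additional step, absorbed by a further Sherman--Morrison update since $\bar{\bx}_n^\top\widetilde{\bV}_n^{-1}\bar{\bx}_n$ is bounded and stays away from the critical value.
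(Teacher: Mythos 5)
The paper does not actually prove Lemma \ref{lem:lem1}: it imports it verbatim as Lemma 5.3 of \citet{bodnarokhrinparolya2020}, and your overall architecture (resolvent at $z=0$, Marchenko--Pastur inverse moments $(1-c)^{-1}$ and $(1-c)^{-3}$, derivative of the resolvent for $\bV_n^{-2}$, martingale/Burkholder bookkeeping under $4+\varepsilon$ moments) is the standard route taken in that literature, and the constants you compute are correct.

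There is, however, a genuine flaw in your opening reduction. Writing $\bI_n-\tfrac{1}{n}\bOne_n\bOne_n^\top=\bU\bU^\top$ and $\bZ_n=\bX_n\bU$ does \emph{not} give a matrix whose columns are independent: the entries of $\bZ_n$ are uncorrelated linear combinations of the entries of $\bX_n$, and independence of the columns holds only under Gaussianity (orthogonal invariance). Since the whole point of the lemma is to avoid distributional assumptions beyond $4+\varepsilon$ moments, you cannot apply Bai--Yin or the i.i.d.-based bilinear deterministic-equivalent machinery to $\bZ_n$; in particular your claim that $\lambda_{\min}(\bV_n)$ is a.s.\ bounded below, which your Cauchy-integral step for \eqref{eqn:lem1eq4} relies on, is not justified this way. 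The Sherman--Morrison treatment of the centering, which you relegate to a closing remark, is therefore not a ``minor additional step'' but the necessary route (it is exactly how the cited work, and Lemma \ref{lem:lem3} of this paper, handle centering). Moreover, for that route, boundedness of $\bar\bx_n^\top\widetilde{\bV}_n^{-1}\bar\bx_n$ away from the critical value is not enough: to recover the \emph{same} limits $(1-c)^{-1}\bxi^\top\btheta$ and $(1-c)^{-3}\bxi^\top\btheta$ you must also show that the cross terms $\bxi^\top\widetilde{\bV}_n^{-1}\bar\bx_n$ and $\bxi^\top\widetilde{\bV}_n^{-2}\bar\bx_n$ vanish almost surely and that $1-\tfrac{n}{n-1}\bar\bx_n^\top\widetilde{\bV}_n^{-1}\bar\bx_n\to 1-c>0$; these statements need their own quadratic-form/leave-one-out arguments because $\bar\bx_n$ is not independent of $\widetilde{\bV}_n$. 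With the argument rebuilt entirely on $\widetilde{\bV}_n$ (where Bai--Yin and the resolvent concentration legitimately apply) plus these Sherman--Morrison corrections, your sketch becomes a correct proof in the spirit of the result the paper cites.
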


\begin{lemma}\label{lem:lem2}
Let $\bxi$ and $\btheta$ be two nonrandom vectors with bounded Euclidean norms and let $m,n>1$. Then it holds that
    \begin{align}
        \left|\bxi^\top \widetilde{\bV}_{n}^{-1} \widetilde{\bV}_{n+m}^{-1}\btheta - d_{n,n+m}\bxi^\top\btheta  \right| &\stackrel{a.s.}{\rightarrow} 0, \label{eqn:lem2eq1}
    \end{align}
for $p/n \rightarrow c \in (0, 1)$
    as $n\rightarrow \infty$ with
\begin{equation} \label{eqn:lem2eq2}
d_{n,n+m}=\frac{b_{n,n+m}}{a_{n,n+m}}\left(\left(1-\frac{p}{n}\right)^{-1}
-2\left(1-\frac{p}{n}+a_{n,n+m}+\sqrt{\left(1-\frac{p}{n}-a_{n,n+m}\right)^2+4a_{n,n+m}}\right)^{-1}\right),
\end{equation}
where
\begin{equation} \label{eqn:lem2eq3}
a_{n,n+m}=\frac{n+m-p}{n+m} \frac{n+m-1}{n-1}
\quad \text{and} \quad
b_{n,n+m}=\frac{n+m-1}{n-1}.
\end{equation}
\end{lemma}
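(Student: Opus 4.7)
The plan is to exploit the additive structure linking $\widetilde{\bV}_n$ and $\widetilde{\bV}_{n+m}$. Splitting $\bX_{n+m}=(\bX_n,\bX_m')$ with $\bX_m'$ an independent $p\times m$ block of i.i.d.\ entries, one has $(n+m-1)\widetilde{\bV}_{n+m}=(n-1)\widetilde{\bV}_n+\bX_m'\bX_m'^{\top}$, so
\begin{equation*}
\bxi^\top \widetilde{\bV}_n^{-1}\widetilde{\bV}_{n+m}^{-1}\btheta
= b_{n,n+m}\,\bxi^\top \widetilde{\bV}_n^{-1}\bigl(\widetilde{\bV}_n+\tfrac{1}{n-1}\bX_m'\bX_m'^{\top}\bigr)^{-1}\btheta.
\end{equation*}
This isolates an independent rank-$m$ Wishart-type perturbation of $\widetilde{\bV}_n$, so random-matrix tools can be applied conditionally on $\bX_n$.

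Next I would apply the Sherman--Morrison--Woodbury identity to the inverse on the right, yielding
\begin{equation*}
\bxi^\top \widetilde{\bV}_n^{-1}\widetilde{\bV}_{n+m}^{-1}\btheta
= b_{n,n+m}\left[\bxi^\top \widetilde{\bV}_n^{-2}\btheta
-\tfrac{1}{n-1}\bxi^\top \widetilde{\bV}_n^{-2}\bX_m'\bH^{-1}\bX_m'^{\top}\widetilde{\bV}_n^{-1}\btheta\right],
\end{equation*}
where $\bH=\bI_m+\tfrac{1}{n-1}\bX_m'^{\top}\widetilde{\bV}_n^{-1}\bX_m'$. Conditioning on $\bX_n$, the remaining randomness lives in $\bX_m'$, and standard Bai--Silverstein-type concentration should let me replace $\bH^{-1}$ by its deterministic equivalent, which only depends on the spectrum of $\bT=\tfrac{1}{n-1}\bX_m'^{\top}\widetilde{\bV}_n^{-1}\bX_m'$ and on $\widetilde{\bV}_n^{-1}$.

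The central step is to identify that deterministic equivalent. The matrix $\bT$ is a generalised Wishart whose limiting spectral distribution is the free multiplicative convolution of a Marchenko--Pastur law of ratio $p/m$ with the (inverse-MP) limiting spectrum of $\widetilde{\bV}_n^{-1}$; its Stieltjes transform therefore satisfies a quadratic self-consistent equation. Evaluating that equation at $z=-1$ (which is what governs $\bH^{-1}$) is exactly what produces the square-root $\sqrt{(1-p/n-a_{n,n+m})^{2}+4a_{n,n+m}}$ that appears in $d_{n,n+m}$. Once this Stieltjes-transform identification is in hand, combining it with Lemma~\ref{lem:lem1} (whose statement carries over from $\bV_n$ to $\widetilde{\bV}_n$ because the two differ by a rank-one term that is negligible in any bounded bilinear form a.s.) delivers the factor $(1-p/n)^{-1}$ and collapses the whole expression into $d_{n,n+m}\,\bxi^\top\btheta$.

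The hard part will be the Stieltjes-transform analysis of $\bH$: the naive trace-lemma approximation $\bH\approx(1+\tr(\widetilde{\bV}_n^{-1})/(n-1))\bI_m$ is insufficient, because when $m$ and $n$ are of comparable order the eigenvalues of $\bT$ are genuinely spread on an MP-like support, and only the full self-consistent equation (not a first-order concentration) yields the correct $\sqrt{\cdot}$ term. Once the deterministic equivalent of $\bH^{-1}$ is established, the remaining work is bookkeeping via repeated use of Lemma~\ref{lem:lem1} and almost-sure concentration of quadratic forms in $\bX_m'$.
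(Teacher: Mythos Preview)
Your approach is broadly sound but differs from the paper's in two places that are worth noting.

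First, the paper does not go through Woodbury at all. After the same decomposition $\widetilde{\bV}_{n+m}=\frac{n-1}{n+m-1}\widetilde{\bV}_n+\frac{m-1}{n+m-1}\widetilde{\bV}_{n+1:n+m}$, it invokes Theorem~1 of \citet{rubio2011spectral} conditionally on $\widetilde{\bV}_n$ to replace the bilinear form $\bxi^\top\widetilde{\bV}_n^{-1}\widetilde{\bV}_{n+m}^{-1}\btheta$ directly by $\bxi^\top\widetilde{\bV}_n^{-1}\bigl(\frac{n-1}{n+m-1}\widetilde{\bV}_n+k_{n,n+m}\bI\bigr)^{-1}\btheta$, where $k_{n,n+m}$ solves the associated self-consistent equation; this equation is then shown, via an elementary algebraic manipulation, to have the explicit root $k_{n,n+m}=\frac{n+m-p}{n+m}$. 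In effect, the Rubio--Mestre result packages exactly the Stieltjes-transform analysis you describe for $\bH$, so your plan would be rederiving that theorem in this special case. Both routes land on the same self-consistent equation, but citing the existing deterministic-equivalent result is considerably shorter.

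Second, once $\widetilde{\bV}_{n+m}^{-1}$ has been replaced by $(\frac{n-1}{n+m-1}\widetilde{\bV}_n+k_{n,n+m}\bI)^{-1}$, the paper does not see any $\widetilde{\bV}_n^{-2}$ term: it uses the resolvent identity
\[
\widetilde{\bV}_n^{-1}\bigl(\widetilde{\bV}_n+a_{n,n+m}\bI\bigr)^{-1}
=\frac{1}{a_{n,n+m}}\Bigl(\widetilde{\bV}_n^{-1}-\bigl(\widetilde{\bV}_n+a_{n,n+m}\bI\bigr)^{-1}\Bigr),
\]
and then two applications of Lemma~5.1 in \citet{bodnarokhrinparolya2020} give the limits $(1-p/n)^{-1}$ and the square-root term separately. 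Your bookkeeping via $\bxi^\top\widetilde{\bV}_n^{-2}\btheta\to(1-c)^{-3}\bxi^\top\btheta$ and a correction from the Woodbury remainder would ultimately reconcile to the same $d_{n,n+m}$, but the intermediate expressions are different; in particular the factor $(1-p/n)^{-1}$ does not arise from Lemma~\ref{lem:lem1} applied to your leading term, so that last sentence of your sketch mislabels where it comes from.
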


\begin{proof}[Proof of Lemma \ref{lem:lem2}]
It holds that
\begin{eqnarray}
&&\left|\bxi^\top \widetilde{\bV}_{n}^{-1} \widetilde{\bV}_{n+m}^{-1}\btheta -d_{n,n+m}\bxi^\top\btheta  \right|
\nonumber\\
&\le&\left|\frac{\bxi^\top \widetilde{\bV}_{n}^{-1} \widetilde{\bV}_{n+m}^{-1}\btheta}{\sqrt{\bxi^\top \widetilde{\bV}_{n}^{-2}\bxi}\sqrt{\btheta^\top\btheta}} - \frac{\bxi^\top\widetilde{\bV}_{n}^{-1}\left(\frac{n-1}{n+m-1}\widetilde{\bV}_{n}+\frac{n+m-p}{n+m}\bI\right)^{-1}\btheta}
{\sqrt{\bxi^\top \widetilde{\bV}_{n}^{-2}\bxi}\sqrt{\btheta^\top\btheta}} \right|
\sqrt{\bxi^\top \widetilde{\bV}_{n}^{-2}\bxi}\sqrt{\btheta^\top\btheta}\nonumber\\
&+&\left|\bxi^\top\widetilde{\bV}_{n}^{-1}\left(\frac{n-1}{n+m-1}\widetilde{\bV}_{n}+\frac{n+m-p}{n+m}\bI\right)^{-1}\btheta - d_{n,n+m}\bxi^\top\btheta  \right| ,\label{lem2_eq0_pr}
\end{eqnarray}
where $\btheta^\top\btheta<\infty$ by the assumption and by Lemma 5.2 of \citet{bodnarokhrinparolya2020} we get $\bxi^\top \widetilde{\bV}_{n}^{-2}\bxi\stackrel{a.s.}{\rightarrow}(1-c)^{-3}\bxi^\top\bxi$.

Let $\widetilde{\bV}_{n+1:n+m}=\frac{1}{m-1}\bX_{n+1:n+m}\bX_{n+1:n+m}^\top$ where $\bX_{n+1:n+m}$ stands for the submatrix of $\bX_{n+m}$ consisting of its last $m$ columns. Since $\bX_{n}$ and $\bX_{n+1:n+m}$ are independent, $\widetilde{\bV}_{n}$ and $\widetilde{\bV}_{n+1:n+m}$ are also independent. Moreover, we get
\[\widetilde{\bV}_{n+m}=\frac{n-1}{n+m-1}\widetilde{\bV}_{n}+\frac{m-1}{n+m-1}\widetilde{\bV}_{n+1:n+m}.\]

Following Theorem 1 in \citet{rubio2011spectral} conditionally on  $\widetilde{\bV}_{n}$, we get that
\[\left|\frac{\bxi^\top \widetilde{\bV}_{n}^{-1} \widetilde{\bV}_{n+m}^{-1}\btheta}{\sqrt{\bxi^\top \widetilde{\bV}_{n}^{-2}\bxi}\sqrt{\btheta^\top\btheta}} - \frac{\bxi^\top\widetilde{\bV}_{n}^{-1}\left(\frac{n-1}{n+m-1}\widetilde{\bV}_{n}+k_{n,n+m}\bI\right)^{-1}\btheta}
{\sqrt{\bxi^\top \widetilde{\bV}_{n}^{-2}\bxi}\sqrt{\btheta^\top\btheta}} \right| \stackrel{a.s.}{\rightarrow} 0\]
for $p/m \rightarrow \tilde c \in (0, \infty)$ as $m\rightarrow \infty$
where $k_{n,n+m}$ is found as the solution of the equation
\begin{eqnarray}\label{lem2_eq1_pr}
\frac{m}{p}(k_{n,n+m}^{-1}-1)&=&\frac{1}{p}\text{tr}\left(\frac{m}{n+m-1}\left(\frac{n-1}{n+m-1}\widetilde{\bV}_{n}+k_{n,n+m}\frac{m}{n+m-1}\bI\right)^{-1}\right)\\
 &\stackrel{a.s.}{\rightarrow}& 2\frac{m}{n}\left(1-\frac{p}{n}+k_{n,n+m}\frac{m}{n}+
 \sqrt{\left(1-\frac{p}{n}-k_{n,n+m}\frac{m}{n}\right)^2+4k_{n,n+m}\frac{m}{n}}\right)^{-1}\nonumber
\end{eqnarray}

Using that for $a,b\in \mathds{R}$ we have
\[\left((a+b)+\sqrt{(a-b)^2+4b}\right)\left((a+b)-\sqrt{(a-b)^2+4b}\right)
=4b(a-1),
\]
the identity \eqref{lem2_eq1_pr} is equivalent to
\begin{eqnarray*}
2\frac{m}{n}(k_{n,n+m}-1)&=&
1-\frac{p}{n}+k_{n,n+m}\frac{m}{n}-
 \sqrt{\left(1-\frac{p}{n}-k_{n,n+m}\frac{m}{n}\right)^2+4k_{n,n+m}\frac{m}{n}}
\end{eqnarray*}
or
\begin{eqnarray*}
\sqrt{\left(1-\frac{p}{n}-k_{n,n+m}\frac{m}{n}\right)^2+4k_{n,n+m}\frac{m}{n}}&=&
1-\frac{p}{n}+2\frac{m}{n}-\frac{m}{n}k_{n,n+m},
\end{eqnarray*}
whose solution is given by
\[k_{n,n+m}=\frac{n+m-p}{n+m}.\]

For the second summand in \eqref{lem2_eq0_pr} we get that
\begin{eqnarray*}
&&\bxi^\top\widetilde{\bV}_{n}^{-1}\left(\frac{n-1}{n+m-1}\widetilde{\bV}_{n}+\frac{n+m-p}{n+m}\bI\right)^{-1}\btheta
=b_{n,n+m}\bxi^\top\widetilde{\bV}_{n}^{-1}\left(\widetilde{\bV}_{n}+a_{n,n+m}\bI\right)^{-1}\btheta\\
&=&\frac{b_{n,n+m}}{a_{n,n+m}}\left(\bxi^\top\widetilde{\bV}_{n}^{-1}\btheta
-\bxi^\top\left(\widetilde{\bV}_{n}+a_{n,n+m}\bI\right)^{-1}\btheta\right)\\
&\stackrel{a.s.}{\rightarrow}& \frac{b_{n,n+m}}{a_{n,n+m}}\left(\left(1-\frac{p}{n}\right)^{-1}
-2\left(1-\frac{p}{n}+a_{n,n+m}+\sqrt{\left(1-\frac{p}{n}-a_{n,n+m}\right)^2+4a_{n,n+m}}\right)^{-1}\right)\bxi^\top\btheta
\end{eqnarray*}
using Lemma 5.1 in \citet{bodnarokhrinparolya2020} two times.
\end{proof}

\begin{lemma}\label{lem:lem3}
Let $\bxi$ and $\btheta$ be two nonrandom vectors with bounded Euclidean norms and let $m,n>1$. Then it holds that
    \begin{align}
        \left|\bxi^\top \bV_{n}^{-1} \bV_{n+m}^{-1}\btheta - d_{n,n+m}\bxi^\top\btheta  \right| &\stackrel{a.s.}{\rightarrow} 0, \label{eqn:lem3eq1}
    \end{align}
for $p/n \rightarrow c \in (0, 1)$
    as $n\rightarrow \infty$ where $d_{n,n+m}$ is defined in \eqref{eqn:lem2eq2}.
\end{lemma}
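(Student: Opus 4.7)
The plan is to reduce Lemma \ref{lem:lem3} to Lemma \ref{lem:lem2} by isolating the rank-one perturbation that distinguishes $\bV_n$ from $\widetilde{\bV}_n$. Starting from the identity $\bV_n=\widetilde{\bV}_n-\frac{n}{n-1}\bar\bx_n\bar\bx_n^\top$ (and its analogue for $n+m$), the Sherman-Morrison formula gives
\[\bV_n^{-1}=\widetilde{\bV}_n^{-1}+\frac{\frac{n}{n-1}\widetilde{\bV}_n^{-1}\bar\bx_n\bar\bx_n^\top\widetilde{\bV}_n^{-1}}{1-\frac{n}{n-1}\bar\bx_n^\top\widetilde{\bV}_n^{-1}\bar\bx_n},\]
and similarly for $\bV_{n+m}^{-1}$. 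Substituting both expansions into $\bxi^\top\bV_n^{-1}\bV_{n+m}^{-1}\btheta$ produces the main term $\bxi^\top\widetilde{\bV}_n^{-1}\widetilde{\bV}_{n+m}^{-1}\btheta$ plus three correction terms carrying either one or two of the rank-one updates.

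For the main term I would invoke Lemma \ref{lem:lem2} directly, yielding the target limit $d_{n,n+m}\bxi^\top\btheta$. The remainder of the proof is devoted to showing the three correction terms vanish almost surely. Each correction is a ratio whose denominator involves $1-\frac{n}{n-1}\bar\bx_n^\top\widetilde{\bV}_n^{-1}\bar\bx_n$ (or its $n+m$ analogue). Applying Sherman-Morrison in the reverse direction gives the neat identity
\[1-\tfrac{n}{n-1}\bar\bx_n^\top\widetilde{\bV}_n^{-1}\bar\bx_n=\bigl(1+\tfrac{n}{n-1}\bar\bx_n^\top\bV_n^{-1}\bar\bx_n\bigr)^{-1},\]
which together with the trace-concentration estimate $\bar\bx_n^\top\bV_n^{-1}\bar\bx_n\stackrel{a.s.}{\to}c/(1-c)$ (obtained from Lemma \ref{lem:lem1} applied to $\tfrac{1}{n}\tr(\bV_n^{-1})$) keeps the denominators bounded away from zero almost surely.

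The crucial step is controlling the numerator factors, which are bilinear forms such as $\bxi^\top\widetilde{\bV}_n^{-1}\bar\bx_n$. By the same reverse Sherman-Morrison these are proportional to $\bxi^\top\bV_n^{-1}\bar\bx_n$, so it suffices to prove $\bxi^\top\bV_n^{-1}\bar\bx_n\stackrel{a.s.}{\to}0$. Exploiting the orthogonal decomposition $\bX_n\bH_n^\top=(\sqrt{n}\bar\bx_n,\bZ_n)$, where $\bH_n$ has first row $\bOne_n^\top/\sqrt{n}$, one writes $\bV_n=\frac{1}{n-1}\bZ_n\bZ_n^\top$ and thereby separates the randomness of $\bV_n$ from that of $\bar\bx_n$. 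Conditioning on $\bZ_n$ and using row-wise independence of $\bX_n$ gives, in the Gaussian case, $\E[(\bxi^\top\bV_n^{-1}\bar\bx_n)^2\mid\bZ_n]=\tfrac{1}{n}\bxi^\top\bV_n^{-2}\bxi=O(1/n)$ by Lemma \ref{lem:lem1}, and a Borel-Cantelli argument upgrades this to the almost sure limit. The same strategy handles the other numerator factors $\bar\bx_n^\top\widetilde{\bV}_n^{-1}\widetilde{\bV}_{n+m}^{-1}\btheta$ and $\bar\bx_n^\top\widetilde{\bV}_n^{-1}\widetilde{\bV}_{n+m}^{-1}\bar\bx_{n+m}$ appearing in the remaining two correction terms.

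The main obstacle I anticipate lies in the non-Gaussian regime: although $\bar\bx_n$ and $\bZ_n$ are exactly independent when the entries of $\bX_n$ are Gaussian, in general they are only uncorrelated, and the conditional-variance shortcut must be replaced by a truncation and decoupling argument that uses only the finite $4+\varepsilon$ moment. Performing this estimate uniformly across the three correction terms and simultaneously for the two sample sizes $n$ and $n+m$ is the technical heart of the proof, but it follows standard routes in modern random matrix theory and is compatible with the tools already invoked in Lemmas \ref{lem:lem1} and \ref{lem:lem2}.
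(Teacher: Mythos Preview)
Your approach is essentially the same as the paper's: apply Sherman--Morrison to each of $\bV_n^{-1}$ and $\bV_{n+m}^{-1}$, invoke Lemma~\ref{lem:lem2} for the main term $\bxi^\top\widetilde{\bV}_n^{-1}\widetilde{\bV}_{n+m}^{-1}\btheta$, and argue that the three rank-one correction terms vanish almost surely. The only difference is in how the correction terms are dispatched. The paper simply cites Lemma~5.2 of \citet{bodnarokhrinparolya2020}, which already supplies, under the $4+\varepsilon$ moment assumption, the needed estimates on quantities such as $\bar\bx_n^\top\widetilde{\bV}_n^{-1}\bar\bx_n$ and $\bxi^\top\widetilde{\bV}_n^{-1}\bar\bx_n$; you instead sketch a direct derivation via the orthogonal decomposition $\bX_n\bH_n^\top=(\sqrt{n}\bar\bx_n,\bZ_n)$ and a conditional-variance/Borel--Cantelli argument. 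Your worry about the non-Gaussian case is exactly what that cited lemma handles, so in practice you can avoid the truncation-and-decoupling work by appealing to it directly.
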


\begin{proof}[Proof of Lemma \ref{lem:lem3}]
The application of the Sherman-Morrison formula leads to
\begin{eqnarray*}
&&\bxi^\top \bV_{n}^{-1} \bV_{n+m}^{-1}\btheta=
\bxi^\top \left(\widetilde{\bV}_{n}-\frac{n}{n-1}\bar{\bx}_n\bar{\bx}_n^\top\right)^{-1} \left(\widetilde{\bV}_{n+m}-\frac{n+m}{n+m-1}\bar{\bx}_{n+m}\bar{\bx}_{n+m}^\top\right)^{-1}\btheta
\\
&=& \bxi^\top \widetilde{\bV}_{n}^{-1} \widetilde{\bV}_{n+m}^{-1}\btheta+\frac{n+m}{n+m-1}\frac{
\bxi^\top \widetilde{\bV}_{n}^{-1} \widetilde{\bV}_{n+m}^{-1}\bar{\bx}_{n+m}\bar{\bx}_{n+m}^\top\widetilde{\bV}_{n+m}^{-1}\btheta}{1-\frac{n+m}{n+m-1}\bar{\bx}_{n+m}^\top\widetilde{\bV}_{n+m}^{-1}\bar{\bx}_{n+m}}\\
&+&\frac{n}{n-1}\frac{
\bxi^\top \widetilde{\bV}_{n}^{-1}\bar{\bx}_{n}\bar{\bx}_{n}^\top \widetilde{\bV}_{n}^{-1}\widetilde{\bV}_{n+m}^{-1}\btheta}{1-\frac{n}{n-1}\bar{\bx}_{n}^\top\widetilde{\bV}_{n}^{-1}\bar{\bx}_{n}}\\
&+&\frac{n}{n-1}\frac{n+m}{n+m-1}
\frac{
\bxi^\top \widetilde{\bV}_{n}^{-1}\bar{\bx}_{n}\bar{\bx}_{n}^\top \widetilde{\bV}_{n}^{-1} \widetilde{\bV}_{n+m}^{-1}\bar{\bx}_{n+m}\bar{\bx}_{n+m}^\top\widetilde{\bV}_{n+m}^{-1}\btheta}{(1-\frac{n}{n-1}\bar{\bx}_{n}^\top\widetilde{\bV}_{n}^{-1}\bar{\bx}_{n})(1-\frac{n+m}{n+m-1}\bar{\bx}_{n+m}^\top\widetilde{\bV}_{n+m}^{-1}\bar{\bx}_{n+m})}.
\end{eqnarray*}
The applications of Lemma 5.2 in \citet{bodnarokhrinparolya2020} and Lemma \ref{lem:lem2} completes the proof.
\end{proof}

\begin{proof}[Proof of Theorem \ref{th1}]
Rewriting \eqref{alp_ni_star} we get
\begin{eqnarray*}
    \psi_{n_i}^*&=& \frac{\hbw_{SH;n_{i-1}}^\top \bSigma\hbw_{SH;n_{i-1}}-\frac{\hbw_{SH;n_{i-1}}^\top \bSigma \bS_{n_i}^{-1}\bOne_p}{\bOne_p^\top\bS_{n_i}^{-1}\bOne_p}} {\hbw_{SH;n_{i-1}}^\top \bSigma\hbw_{SH;n_{i-1}}
    -2\frac{\hbw_{SH;n_{i-1}}^\top \bSigma \bS_{n_i}^{-1}\bOne_p}{\bOne_p^\top\bS_{n_i}^{-1}\bOne_p}+\frac{\bOne_p^\top\bS_{n_i}^{-1} \bSigma \bS_{n_i}^{-1}\bOne_p}{(\bOne_p^\top\bS_{n_i}^{-1}\bOne_p)^2}}=\frac{h_{n_i}}{g_{n_i}}
\end{eqnarray*}
with
\begin{eqnarray*}
 h_{n_i} &=&\bOne_p^\top \bSigma^{-1}\bOne_p\hbw_{SH;n_{i-1}}^\top \bSigma\hbw_{SH;n_{i-1}}
 -\frac{\bOne_p^\top \bSigma^{-1}\bOne_p}{\bOne_p^\top\bS_{n_i}^{-1}\bOne_p}\hbw_{SH;n_{i-1}}^\top \bSigma \bS_{n_i}^{-1}\bOne_p
\end{eqnarray*}
and
\begin{eqnarray*}
 g_{n_i} &=&
  {\bOne_p^\top \bSigma^{-1}\bOne_p}\hbw_{SH;n_{i-1}}^\top \bSigma\hbw_{SH;n_{i-1}}+\frac{\bOne_p^\top\bS_{n_i}^{-1} \bSigma \bS_{n_i}^{-1}\bOne_p}{\bOne_p^\top \bSigma^{-1}\bOne_p}
    \left(\frac{\bOne_p^\top \bSigma^{-1}\bOne_p}{\bOne_p^\top\bS_{n_i}^{-1}\bOne_p}\right)^2\\
    &-&2\frac{\bOne_p^\top \bSigma^{-1}\bOne_p}{\bOne_p^\top\bS_{n_i}^{-1}\bOne_p}
    \hbw_{SH;n_{i-1}}^\top \bSigma \bS_{n_i}^{-1}\bOne_p
\end{eqnarray*}

The application of Lemma \ref{lem:lem1} with $\bxi=\btheta=\dfrac{\bSigma^{-1/2}\bOne_p}{\bOne_p^\top \bSigma^{-1}\bOne_p}$ leads to
\begin{equation}\label{th1_proof_eq1}
\frac{\bOne_p^\top\bS_{n_i}^{-1}\bOne_p }{\bOne_p^\top \bSigma^{-1}\bOne_p}   \stackrel{a.s.}{\rightarrow} (1-c_i)^{-1} 
\quad
\text{and}
\quad
\frac{\bOne_p^\top\bS_{n_i}^{-1} \bSigma \bS_{n_i}^{-1}\bOne_p}{\bOne_p^\top \bSigma^{-1}\bOne_p}  \stackrel{a.s.}{\rightarrow} (1-c_i)^{-3} 
\end{equation}
for $p/n_i \rightarrow c_i \in (0, 1)$ as $n_i\rightarrow \infty$.

Since the variance of the GMV portfolio is equal to $1/(\bOne_p^\top \bSigma^{-1}\bOne_p)$, we have that
\[\bOne_p^\top \bSigma^{-1}\bOne_p
    \hbw_{SH;n_{i-1}}^\top \bSigma\hbw_{SH;n_{i-1}}=\frac{\hbw_{SH;n_{i-1}}^\top \bSigma\hbw_{SH;n_{i-1}}}{1/\bOne_p^\top \bSigma^{-1}\bOne_p}=1+r_{\hbw_{SH;n_{i-1}}}\ge1,\]
where $r_{\hbw_{SH;n_{i-1}}}$ is the relative loss of the portfolio with weights $\hbw_{SH;n_{i-1}}$ computed with respect to the variance of the GMV portfolio. Next, we recursively derive the asymptotic behaviour of $r_{\hbw_{SH;n_{i-1}}}$. 

For $i=1$ the sample estimator of the GMV portfolio weights $\hbw_{S;n_i}$ is shrunk to the deterministic target $\bb$, i.e., $\hbw_{SH;n_{0}}=\bb$. In this case the relative loss is given by
\begin{eqnarray*}
r_{\hbw_{SH;n_{0}}}=r_{0}=\bOne_p^\top \bSigma^{-1}\bOne_p\bb^\top\bSigma \bb-1,
\end{eqnarray*}
which is bounded uniformly on $p$ following the assumption of the theorem and
\[\frac{\bb^\top \bSigma \bS_{n_1}^{-1}\bOne_p}{\sqrt{\bb^\top \bSigma\bb}\sqrt{\bOne_p^\top \bSigma^{-1}\bOne_p}}\stackrel{a.s.}{\rightarrow} (1-c_1)^{-1} \frac{1}{\sqrt{\bb^\top \bSigma\bb}\sqrt{\bOne_p^\top \bSigma^{-1}\bOne_p}}=(1-c_1)^{-1}\frac{1}{\sqrt{r_{0}+1}},\]
using Lemma \ref{lem:lem1} with $\bxi=\dfrac{\bSigma^{1/2}\bb}{\sqrt{\bb^\top \bSigma\bb}}$ and $\btheta=\dfrac{\bSigma^{-1/2}\bOne_p}{\bOne_p^\top \bSigma^{-1}\bOne_p}$. Combining these results with \eqref{th1_proof_eq1} leads to
\[\psi_{n_1}^* \stackrel{a.s.}{\rightarrow} \psi_1^* \]
for $p/n_1 \rightarrow c_1 \in (0, 1)$ as $n_1\rightarrow \infty$ with
\[\psi_1^*=\frac{(r_{0}+1)-1}
{(r_{0}+1)+(1-c_1)^{-1}-2}
=\frac{(1-c_1)r_{0}}{(1-c_1)r_{0}+c_1}.
\]
Moreover, the relative loss of the portfolio $\hbw_{SH;n_{i-1}}$ tends to 
\begin{eqnarray*}
r_{\hbw_{SH;n_{1}}}&=& \bOne_p^\top \bSigma^{-1}\bOne_p\left((\psi_{n_1}^*)^2\hbw_{S;n_1}^\top \bSigma \hbw_{S;n_1}+ 2\psi_{n_1}^*(1-\psi_{n_1}^*)\hbw_{S;n_1}^\top\bSigma \bb+ (1-\psi_{n_1}^*)^2\bb^\top\bSigma \bb\right)-1\\
&=& (\psi_{n_1}^*)^2\frac{\bOne_p^\top\bS_{n_1}^{-1} \bSigma \bS_{n_1}^{-1}\bOne_p}{\bOne_p^\top \bSigma^{-1}\bOne_p}\left(\frac{\bOne_p^\top \bSigma^{-1}\bOne_p}{\bOne_p^\top\bS_{n_1}^{-1}\bOne_p}\right)^2\\
&+& 2\psi_{n_1}^*(1-\psi_{n_1}^*)\bOne_p^\top\bS_{n_1}^{-1} \bSigma\bb
\frac{\bOne_p^\top \bSigma^{-1}\bOne_p}{\bOne_p^\top\bS_{n_1}^{-1}\bOne_p}
+ (1-\psi_{n_1}^*)^2(r_{0}+1)-1\\
&\stackrel{a.s.}{\rightarrow}&
 (\psi_{1}^*)^2(1-c_1)^{-1}+2\psi_{1}^*(1-\psi_{1}^*)+(1-\psi_{1}^*)^2(r_{0}+1)-1\\
&=&(\psi_{1}^*)^2\frac{c_1}{1-c_1}+(1-\psi_{1}^*)^2r_{0}=r_1,
\end{eqnarray*}
for $p/n_1 \rightarrow c_1 \in (0, 1)$ as $n_1\rightarrow \infty$.

Using the last result we get for $i=2$ that
\[\bOne_p^\top \bSigma^{-1}\bOne_p
    \hbw_{SH;n_{1}}^\top \bSigma\hbw_{SH;n_{1}}\stackrel{a.s.}{\rightarrow}
r_1+1.
\]
Since non-overlaping samples $\bY_{n_1}$ $\bY_{n_2}$ are used in the computation of the $\hbw_{SH;n_{1}}$ and $\bS_{n_2}$, these two random objects are independent. The application of Lemma \ref{lem:lem1} conditionally on $\bY_{n_1}$ leads to
\begin{eqnarray*}
\left|\frac{\hbw_{SH;n_{1}}^\top \bSigma \bS_{n_2}^{-1}\bOne_p}{\sqrt{\hbw_{SH;n_{1}}^\top \bSigma\hbw_{SH;n_{1}}}\sqrt{\bOne_p^\top \bSigma^{-1}\bOne_p}}-
\frac{(1-c_2)^{-1}}{\sqrt{\hbw_{SH;n_{1}}^\top \bSigma\hbw_{SH;n_{1}}}\sqrt{\bOne_p^\top \bSigma^{-1}\bOne_p}}
\right|\stackrel{a.s.}{\rightarrow} 0,
\end{eqnarray*}
and, hence,
\begin{eqnarray*}
&&\left|\hbw_{SH;n_{1}}^\top \bSigma \bS_{n_2}^{-1}\bOne_p-(1-c_2)^{-1}\right|\\
&=&\left|\frac{\hbw_{SH;n_{1}}^\top \bSigma \bS_{n_2}^{-1}\bOne_p}{\sqrt{\hbw_{SH;n_{1}}^\top \bSigma\hbw_{SH;n_{1}}}\sqrt{\bOne_p^\top \bSigma^{-1}\bOne_p}}\sqrt{\hbw_{SH;n_{1}}^\top \bSigma\hbw_{SH;n_{1}}}\sqrt{\bOne_p^\top \bSigma^{-1}\bOne_p}-
(1-c_2)^{-1}\right|\stackrel{a.s.}{\rightarrow} 0,
\end{eqnarray*}
since $\hbw_{SH;n_{1}}^\top \bSigma\hbw_{SH;n_{1}}\bOne_p^\top \bSigma^{-1}\bOne_p=O_P(1)$
The last results together with \eqref{th1_proof_eq1} yields
\[\psi_{n_2}^* \stackrel{a.s.}{\rightarrow} \psi_2^*= \frac{(r_{1}+1)-1}
{(r_{1}+1)+(1-c_2)^{-1}-2}
=\frac{(1-c_2)r_{1}}{(1-c_2)r_{1}+c_2}
\]
for $p/n_2 \rightarrow c_2 \in (0, 1)$ as $n_2\rightarrow \infty$.

Finally, the relative loss of the portfolio $\hbw_{SH;n_{2}}$ tends to 
\begin{eqnarray*}
r_{\hbw_{SH;n_{2}}}&=& \bOne_p^\top \bSigma^{-1}\bOne_p\Bigg((\psi_{n_2}^*)^2\hbw_{S;n_2}^\top \bSigma \hbw_{S;n_2}+ 2\psi_{n_2}^*(1-\psi_{n_2}^*)\hbw_{S;n_2}^\top\bSigma \hbw_{SH;n_{1}}\\
&+& (1-\psi_{n_2}^*)^2\hbw_{SH;n_{1}}^\top\bSigma \hbw_{SH;n_{1}}\Bigg)-1\\
&=& (\psi_{n_2}^*)^2\frac{\bOne_p^\top\bS_{n_2}^{-1} \bSigma \bS_{n_2}^{-1}\bOne_p}{\bOne_p^\top \bSigma^{-1}\bOne_p}\left(\frac{\bOne_p^\top \bSigma^{-1}\bOne_p}{\bOne_p^\top\bS_{n_2}^{-1}\bOne_p}\right)^2\\
&+& 2\psi_{n_2}^*(1-\psi_{n_2}^*)\bOne_p^\top\bS_{n_2}^{-1} \bSigma\hbw_{SH;n_{1}}
\frac{\bOne_p^\top \bSigma^{-1}\bOne_p}{\bOne_p^\top\bS_{n_2}^{-1}\bOne_p}
+ (1-\psi_{n_2}^*)^2(r_{\hbw_{SH;n_{1}}}+1)-1\\
&\stackrel{a.s.}{\rightarrow}&
 (\psi_{2}^*)^2(1-c_2)^{-1}+2\psi_{2}^*(1-\psi_{2}^*)+(1-\psi_{2}^*)^2(r_{1}+1)-1\\
&=&(\psi_{2}^*)^2\frac{c_2}{1-c_2}+(1-\psi_{2}^*)^2r_{1}=r_2,
\end{eqnarray*}
for $p/n_2 \rightarrow c_2 \in (0, 1)$ as $n_2\rightarrow \infty$.

Repeating the above steps for $i=3,...,T$ leads to the statement of the theorem.
 \end{proof}
\begin{proof}[Proof of Theorem \ref{th2}]
We get
\begin{equation*}
   \Psi_{N_i}^*= \frac{\hbw_{SH;N_{i-1}}^\top \bSigma\hbw_{SH;N_{i-1}}-\frac{\hbw_{SH;N_{i-1}}^\top \bSigma \bS_{N_i}^{-1}\bOne_p}{\bOne_p^\top\bS_{N_i}^{-1}\bOne_p}} {\hbw_{SH;N_{i-1}}^\top \bSigma\hbw_{SH;N_{i-1}}
    -2\frac{\hbw_{SH;N_{i-1}}^\top \bSigma \bS_{n_i}^{-1}\bOne_p}{\bOne_p^\top\bS_{N_i}^{-1}\bOne_p}+\frac{\bOne_p^\top\bS_{N_i}^{-1} \bSigma \bS_{N_i}^{-1}\bOne_p}{(\bOne_p^\top\bS_{N_i}^{-1}\bOne_p)^2}}
   =\frac{H_{N_i}}{G_{N_i}}
\end{equation*}
with
\begin{eqnarray*}
 H_{N_i} &=&\bOne_p^\top \bSigma^{-1}\bOne_p\hbw_{SH;N_{i-1}}^\top \bSigma\hbw_{SH;N_{i-1}}
 -\frac{\bOne_p^\top \bSigma^{-1}\bOne_p}{\bOne_p^\top\bS_{N_i}^{-1}\bOne_p}\hbw_{SH;N_{i-1}}^\top \bSigma \bS_{N_i}^{-1}\bOne_p
\end{eqnarray*}
and
\begin{eqnarray*}
 G_{N_i} &=&
  {\bOne_p^\top \bSigma^{-1}\bOne_p}\hbw_{SH;N_{i-1}}^\top \bSigma\hbw_{SH;N_{i-1}}+\frac{\bOne_p^\top\bS_{N_i}^{-1} \bSigma \bS_{N_i}^{-1}\bOne_p}{\bOne_p^\top \bSigma^{-1}\bOne_p}
    \left(\frac{\bOne_p^\top \bSigma^{-1}\bOne_p}{\bOne_p^\top\bS_{N_i}^{-1}\bOne_p}\right)^2\\
    &-&2\frac{\bOne_p^\top \bSigma^{-1}\bOne_p}{\bOne_p^\top\bS_{N_i}^{-1}\bOne_p}
    \hbw_{SH;N_{i-1}}^\top \bSigma \bS_{N_i}^{-1}\bOne_p
 \end{eqnarray*}

From Lemma \ref{lem:lem1} we have 
\begin{equation}\label{th2_proof_eq1}
\frac{\bOne_p^\top\bS_{N_i}^{-1}\bOne_p }{\bOne_p^\top \bSigma^{-1}\bOne_p}   \stackrel{a.s.}{\rightarrow} (1-C_i)^{-1} 
\quad
\text{and}
\quad
\frac{\bOne_p^\top\bS_{N_i}^{-1} \bSigma \bS_{N_i}^{-1}\bOne_p}{\bOne_p^\top \bSigma^{-1}\bOne_p}  \stackrel{a.s.}{\rightarrow} (1-C_i)^{-3} 
\end{equation}
for $p/N_i \rightarrow C_i \in (0, 1)$ as $N_i\rightarrow \infty$.

The recursive structure of $\hbw_{SH;N_{i-1}}$ implies that
\begin{eqnarray*}
\hbw_{SH;N_{i-1}}=\sum_{j=0}^{i-1} \beta^{*}_{N_{i-1};N_j}\hbw_{N_{j}}
\quad \text{with} \quad \hbw_{N_{0}}=\bb,
\end{eqnarray*}
where $\beta^{*}_{N_{i-1};N_j}$ are computed recursively by
\begin{equation*}
\beta^{*}_{N_0;N_0}=1, \quad \beta^{*}_{N_{i-1};N_{i-1}}= \Psi_{N_{i-1}}^*,
\quad \text{and} \quad \beta^{*}_{N_{i-1};N_j}= (1-\Psi_{N_{i-1}}^*)\beta^{*}_{N_{i-2};N_j}, \, j=0,...i-2.
\end{equation*}
Hence,
\begin{eqnarray*}
\hbw_{SH;N_{i-1}}^\top \bSigma \bS_{N_i}^{-1}\bOne_p
&=&\beta^{*}_{N_{i-1};N_0} \bb^\top\bSigma \bS_{N_i}^{-1}\bOne_p+
\sum_{j=1}^{i-1} \beta^{*}_{N_{i-1};N_j}\frac{\bOne_p^\top\bS_{N_j}^{-1}\bSigma \bS_{N_i}^{-1}\bOne_p}{\bOne_p^\top\bS_{N_j}^{-1}\bOne_p}
\end{eqnarray*}

Let $\Psi_{i-1}^*$ denote the deterministic asymptotic limit of $\Psi_{N_{i-1}}^*$, whose recursive computation is discussed below. We also define
\begin{equation*}
\beta^{*}_{0;0}=1, \quad \beta^{*}_{i-1;i-1}= \Psi_{i-1}^*,
\quad \text{and} \quad \beta^{*}_{i-1;j}= (1-\Psi_{i-1}^*)\beta^{*}_{i-2;j}, \, j=0,...i-2.
\end{equation*}
Then, the application of Lemma \ref{lem:lem3} yields
\[
\hbw_{SH;N_{i-1}}^\top \bSigma \bS_{N_i}^{-1}\bOne_p\stackrel{a.s.}{\rightarrow}
(1-C_i)^{-1}K_i
\quad \text{with} \quad K_i=\beta^{*}_{i-1;0}+\sum_{j=1}^{i-1} \beta^{*}_{i-1;j}D_{j,i}
\]
for $p/N_j \rightarrow C_j \in (0, 1)$ as $N_j\rightarrow \infty$, $j=1,...,i-1$, where
\[D_{j,i}=1-\frac{2(1-C_j)}{(1-C_j)+(1-C_i)\frac{C_j}{C_i}+\sqrt{\left(1-\frac{C_j}{C_i}\right)^2+4(1-C_i)\frac{C_j}{C_i}}}.
\]

Moreover, the relative loss of the portfolio with weights $\hbw_{SH;N_{i}}$ is asymptotically given by
\begin{eqnarray*}
R_{\hbw_{SH;N_{i}}}&=& \bOne_p^\top \bSigma^{-1}\bOne_p\Bigg((\Psi_{N_i}^*)^2\hbw_{S;n_i}^\top \bSigma \hbw_{S;N_i}+ 2\Psi_{N_i}^*(1-\psi_{N_i}^*)\hbw_{S;N_i}^\top\bSigma \hbw_{SH;N_{i-1}}\\
&+& (1-\Psi_{N_i}^*)^2\hbw_{SH;N_{i-1}}^\top\bSigma \hbw_{SH;N_{i-1}}\Bigg)-1\\
&=& (\Psi_{N_i}^*)^2\frac{\bOne_p^\top\bS_{N_i}^{-1} \bSigma \bS_{N_i}^{-1}\bOne_p}{\bOne_p^\top \bSigma^{-1}\bOne_p}\left(\frac{\bOne_p^\top \bSigma^{-1}\bOne_p}{\bOne_p^\top\bS_{N_i}^{-1}\bOne_p}\right)^2\\
&+& 2\Psi_{N_i}^*(1-\Psi_{N_i}^*)\bOne_p^\top\bS_{N_i}^{-1} \bSigma\hbw_{SH;N_{i-1}}
\frac{\bOne_p^\top \bSigma^{-1}\bOne_p}{\bOne_p^\top\bS_{N_{i}}^{-1}\bOne_p}
+ (1-\Psi_{N_i}^*)^2(r_{\hbw_{SH;N_{i-1}}}+1)-1\\
&\stackrel{a.s.}{\rightarrow}&R_i=
  (\Psi_{i}^*)^2(1-C_i)^{-1}+2\Psi_{i}^*(1-\Psi_{i}^*)K_i+(1-\Psi_{i}^*)^2(R_{i-1}+1)-1\\
&=&(\Psi_{i}^*)^2\frac{C_i}{1-C_i}+(1-\Psi_{i}^*)^2R_{i-1}+2\Psi_{i}^*(1-\Psi_{i}^*)(K_i-1).
\end{eqnarray*}

Finally, we get
\[\Psi_{N_i}^* \stackrel{a.s.}{\rightarrow} \Psi_i^* 
\quad \text{with} \quad
\Psi_i^*=\frac{(R_{i-1}+1)-K_i}
{(R_{i-1}+1)+(1-C_i)^{-1}-2K_i},
\]
for $p/N_j \rightarrow C_j \in (0, 1)$ as $N_j\rightarrow \infty$, $j=1,...,i$. As a result, the computation of $\Psi_i^*$ is performed in the following recursive way: (i) first, we compute $R_0=\bOne_p^\top \bSigma^{-1}\bOne_p\bb^\top\bSigma \bb-1$ and $K_1=(1-C_1)^{-1}$ which are used to obtain $\Psi_1^*$; (ii) second, using $R_0$ and $\Psi_1^*$ we find $R_1$ and $K_1=(1-C_1)^2$ used in the computation of $\Psi_2^*$ and proceed the recursive procedure for $i=3,...,T$. The boundedness of $R_0$ ensures that all computed values are finite as well.
\end{proof}

\end{document}